\title{
Coordinating Planning and Tracking in Layered Control Policies via Actor-Critic Learning
}
\author{Fengjun Yang and Nikolai Matni% <-this % stops a space
\thanks{F. Yang is with the Dept. of Comput. and Info. Sci,
        University of Pennsylvania, PA, USA. N. Matni is with the Dept. of Elect. and Syst. Eng., University of Pennsylvania, PA, USA. This work is supported in part by NSF award ECCS-2231349, SLES-2331880, and CAREER-2045834.}%
}
\begin{document}

\maketitle
\thispagestyle{empty}
\pagestyle{empty}

%%%%%%%%%%%%%%%%%%%%%%%%%%%%%%%%%%%%%%%%%%%%%%%%%%%%%%%%%%%%%%%%%%%%%%%%%%%%%%%%
\begin{abstract}

We propose a reinforcement learning (RL)-based algorithm to jointly train (1) a trajectory planner and (2) a tracking controller in a layered control architecture. Our algorithm arises naturally from a rewrite of the underlying optimal control problem that lends itself to an actor-critic learning approach. By explicitly learning a \textit{dual} network to coordinate the interaction between the planning and tracking layers, we demonstrate the ability to achieve an effective consensus between the two components, leading to an interpretable policy. We theoretically prove that our algorithm converges to the optimal dual network in the Linear Quadratic Regulator (LQR) setting and empirically validate its applicability to nonlinear systems through simulation experiments on a unicycle model.

\end{abstract}

%%%%%%%%%%%%%%%%%%%%%%%%%%%%%%%%%%%%%%%%%%%%%%%%%%%%%%%%%%%%%%%%%%%%%%%%%%%%%%%%
\section{Introduction}
Layered control architectures \citep{matni2024towards, chiang2007layering} are ubiquitous in complex cyber-physical systems, such as power networks, communication networks, and autonomous robots. For example, a typical autonomous robot has an autonomy stack consisting of decision-making, trajectory optimization, and low-level control. However, despite the widespread presence of such layered control architectures, there has been a lack of a principled framework for their design, especially in the data-driven regime.

In this work, we propose an algorithm for jointly learning a trajectory planner and a tracking controller. We start from an optimal control problem and show that a suitable relaxation of the problem naturally decomposes into reference generation and trajectory tracking layers. We then propose an algorithm to train a layered policy parameterized in a way that parallels this decomposition using actor-critic methods. Different from previous methods, we show how a \textit{dual network} can be trained to coordinate the trajectory optimizer and the tracking controller. Our theoretical analysis and numerical experiments demonstrate that the proposed algorithm can achieve good performance in various settings while enjoying inherent interpretability and modularity.

\subsection{Related Work}
\subsubsection{Layered control architectures} The idea of layering has been studied extensively in the multi-rate control literature \citep{rosolia2022unified, csomay2022multi}, through the lens of optimization decomposition \citep{chiang2007layering, matni2016theory}, and for specific application domains \citep{samad2007system, samad2017controls, jiang2018improved}. Recently, Matni et al. \citep{matni2024towards} proposed a quantitative framework for the design and analysis of layered control architectures, which has since been instantiated to various control and robotics applications \citep{srikanthan2023augmented, srikanthan2023data, zhang2024change}. Within this framework, our work is most related to \citet{srikanthan2023data, zhang2024change}, which seek to design trajectory planners based on past data of a tracking controller. However, we consider the case where the low-level tracking controller is not given and has to be learned with the trajectory planner. We also provide a more principled approach to coordinating planning and tracking that leverages a dual network.

%%%%%%%%%%%%%%%%%%%% Somehow has to include the figure here, or it will occupy an entire page %%%%%%%%%%%%%%
\begin{figure}
    \centering
    \subfigure[Without Dual Learning]{
        \includegraphics[width=0.4\textwidth]{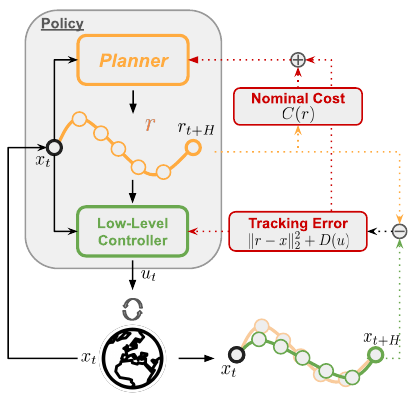}
        \label{fig:schematic-without-dual}
    }
    \hspace{0.05\textwidth}
    \subfigure[With Dual Learning (\textbf{Ours})]{
        \includegraphics[width=0.4\textwidth]{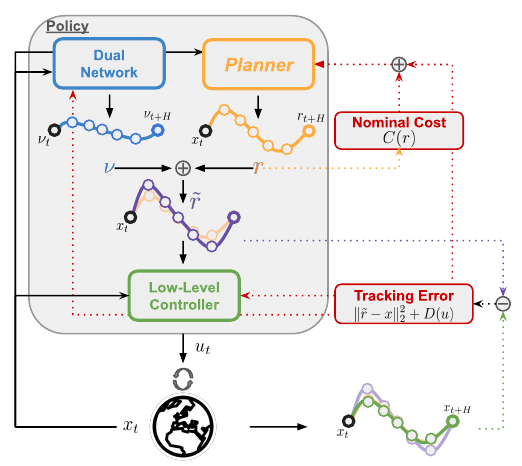}
        \label{fig:schematic-with-dual}
    }
    \caption{Comparison of trajectory planning and tracking approaches. (a) Previous approaches integrate a trajectory planner and a low-level controller by feeding the reference trajectory $r$ generated by the planner directly into the low-level controller. The low-level tracking controller minimizes the tracking cost, while the planner minimizes both the tracking cost and the nominal cost $\mathcal{C}(r)$. However, due to the tracking controller's imperfections, the executed trajectory often deviates from the reference, resulting in suboptimal performance. (b) Our proposed method introduces an additional dual network that learns to preemptively perturb the reference trajectory $r$ to $\tilde{r}$, accounting for the low-level controller's inaccuracies. By perturbing the reference trajectory, the executed trajectory $x$ is closer to the actual reference $r$, thus improving overall performance. We show that this module can be trained in the fashion of a dual update (hence the name) by observing the discrepancy between the reference and the executed trajectory.}
    \label{fig:schematics}
\end{figure}
%%%%%%%%%%%%%%%%%%%% Somehow has to include the figure here, or it will occupy an entire page %%%%%%%%%%%%%%

\subsubsection{Hierarchical reinforcement learning}
Recently, reinforcement learning-based methods have demonstrated impressive performance on highly complex dynamical systems \citep{kumar2021rma, kaufmann2023champion}. Within the RL literature, our approach is most closely related to the idea of \textit{goal-conditioned} reinforcement learning \citep{dayan1992feudal, kulkarni2016hierarchical, levy2017learning, nachum2018data, vezhnevets2017feudal, nachum2018near}. In this framework, an upper-level agent periodically specifies a goal for the lower-level agent to execute. However, the ``intrinsic" reward used to train the lower-level agent is usually heuristically chosen. Nachum et al.~\citep{nachum2018near} derived a principled objective for the lower-level agent based on a suboptimality bound introduced by the hierarchical structure, but they focus on the case where the goal is specified as a learned low-dimensional representation. We focus on the case where the dynamics are deterministic and derive a simple quadratic objective for the lower-level agent (tracking layer). We also structure our upper-level agent (planning layer) to generate full trajectories instead of single waypoints.

\subsubsection{Actor-critic methods}
The actor-critic method \citep{silver2014deterministic, lillicrap2015continuous, fujimoto2018addressing} describes a class of reinforcement learning algorithms that simultaneously learn a policy and its associated value function. These algorithms have achieved great success with continuous control tasks and have found various applications in the controls and robotics community \citep{wang2024actor, grandesso2023cacto}. In this paper, we use actor-critic methods to learn a tracking controller and its value function, where the latter is used to help the trajectory planner determine how difficult a generated trajectory is for the tracking controller to follow.

\subsection{Statement of Contributions}
Our contribution is three-fold. First, we propose a novel way of parameterizing layered policies based on a principled derivation. In this parameterization, we introduce a \textit{dual network} to coordinate the trajectory planner and the tracking controller. We show how this dual network can be trained jointly with other components in the layered policy in an RL fashion. Secondly, we show theoretically and empirically that our algorithm for updating the dual network can recover the optimal dual network parameters for unconstrained linear quadratic regulator (LQR) problems. Finally, we evaluate our method empirically on constrained LQR problems and the unicycle environment to demonstrate its potential to be applied to more complex systems.

%\subsection{Organization}
%In what follows, we first introduce the problem formulation in Section \ref{sec:problem-formulation}. We then present the layered rewrite that inspires our parameterization in \ref{sec:layering}. Then, the parameterization of the layered policy and the algorithm for training the policy are presented in Section \ref{sec:method}. We analyze its performance on unconstrained LQR problems in Section \ref{sec:analysis} and show simulation results in Section \ref{sec:experiment}. We conclude in Section \ref{sec:discussion} with discussions on the results and future directions.
\section{Problem Formulation}\label{sec:problem-formulation}
We consider a discrete-time finite-horizon optimal control problem with state $x_t \in \mathbb{R}^{d_x}$ and control input $u_t \in \mathbb{R}^{d_u}$:
\begin{equation}\label{eq:original-prob}
    \begin{aligned}
        \underset{x_{0:T}, u_{0:T-1}}{\text{minimize}}&\quad \E_{\xi \sim D_\xi}
            \left[ \mathcal{C}(x_{0:T}) + \mathcal{D}(u_{0:T-1}) \right]\\
        \text{subject to}&\quad x_{t+1} = f(x_t, u_t), \quad \forall t=0, 1, ..., T-1,\\
        &\quad x_{0:T}\in \mathcal{X}, \quad u_{0:T-1} \in \mathcal{U}, 
        \quad x_0 = \xi.
    \end{aligned}
\end{equation}
Here, $T \in \mathbb{Z^+}$ is a fixed time horizon, $x_{0:T}=[x_0^\top, ..., x_T^\top]^\top$ and $u_{0:T-1}=[u_0^\top, ..., u_{T-1}^\top]^\top$ respectively denote the state and control trajectory. $\mathcal{C}(x_{0:T})$ and $\mathcal{D}(u_{0:T-1})$ are the state and control costs, respectively. We assume that the input cost and the state and input constraints decouple across time, and denote them respectively by ${\mathcal{D}(u_{0:T-1}) = \sum_{t=0}^{T-1}\mathcal{D}_t(u_t)}$, $\mathcal{X} = \prod_{t=0}^T\mathcal{X}_t$, and $\mathcal{U}=\prod_{t=0}^T\mathcal{U}_t$. The initial condition $\xi$ is sampled i.i.d. from a possibly unknown distribution $D_\xi$.

As per the reinforcement learning convention, we assume that we only have access to the dynamics via a simulator, i.e., that we do not know $f(x_t, u_t)$ explicitly, but can simulate the dynamics for any $x_t$ and $u_t$. However, we do assume that we have access to the cost functions $\mathcal{C}$, $\mathcal{D}$, as they are usually designed by the users, instead of being an inherent hidden part of the system. We also assume that we know the constraints $\mathcal{X}$ and $\mathcal{U}$ for the same reason.

%\red{Maybe rethink this problem formulation with the stochastic component? Just say we want to learn a parameterized controller. No distribution on $\xi$?}

Our goal is to learn a layered policy ${\pi=(\pi^\mathrm{plan}, \pi^\mathrm{track})}$ that consists of 1) a trajectory planner
\begin{equation*}
    \pi^\mathrm{plan}: \R^{d_x} \to \R^{Td_x}
\end{equation*}
that takes in an initial condition $\xi \in \R^{d_x}$ and outputs a reference trajectory $r_{0:T} \in \mathcal{X}$, and 2) a tracking controller
\begin{equation*}
    \quad \pi^\mathrm{track}: \R^{d_x} \times \R^{Td_x} \to \R^{d_u}
\end{equation*}
that takes in the current state and a reference trajectory to output a control action to best track the given trajectory. We now decompose problem \eqref{eq:original-prob} such that it may inform a suitable parameterization for the planning and tracking policies, $\pi^\mathrm{plan}$ and $\pi^\mathrm{track}$.

\section{Layered Approach to Optimal Control}\label{sec:layering}
%In this section, we perform a layered rewrite of \eqref{eq:original-prob} similar to what is done in \cite{matni2016theory, srikanthan2023data}. We first introduce a redundant variable and write out the augmented Lagrangian of the new problem. We then apply dual ascent to arrive at an update rule that has an inherent layered interpretation. In the process, we point out how the optimal \textit{dual} variable of the redundant problem arises naturally as a way to coordinate the two layers.
We first consider a variation of problem~\eqref{eq:original-prob} with a fixed initial condition $\xi$, and rewrite it into a form that has a natural layered control architecture interpretation. For ease of notation, we use unsubscripted letters $x, u, r$ to denote the respective trajectories stacked as a column vector
\begin{equation*}
    x := x_{0:T}, \quad u := u_{0:T-1}, \quad r := r_{0:T}.
\end{equation*}
% \begin{equation*}
%     x := \begin{bmatrix} x_0 \\ x_1 \\ \vdots \\ x_T \end{bmatrix}, \quad
%     u := \begin{bmatrix} u_0 \\ u_1 \\ \vdots \\ u_T \end{bmatrix}, \quad
%     r := \begin{bmatrix} r_0 \\ r_1 \\ \vdots \\ r_T \end{bmatrix}.
% \end{equation*}
 We begin the rewrite of problem~\eqref{eq:original-prob} by introducing a redundant variable $r = x$ to get an equivalent problem
\begin{equation}\label{eq:redundant-prob}
    \begin{aligned}
        \underset{r, x, u}{\text{minimize}}&\quad
            \mathcal{C}(r) + \mathcal{D}(u)\\
        \text{subject to}&\quad x_{t+1} = f(x_t, u_t), \quad \forall t=0, 1, ..., T-1,\\
        &\quad r\in \mathcal{X}, \quad u \in \mathcal{U}, \quad x_0 = \xi, \quad r = x,
    \end{aligned}
\end{equation}
where we use the fact that $r=x$ to move the state cost and constraint from $x$ onto $r$. Defining the indicator functions
%Moving both the state and dynamics constraints to the objective with indicator functions
\begin{gather*}
    \mathbb{I}_{dyn}(x, u) = \begin{cases}
        0, & \begin{aligned}
            &x_0 = \xi,\\ &x_{t+1} = f(x_t, u_t), u \in \mathcal{U},
        \end{aligned}\\
        \infty, & \text{otherwise}
    \end{cases},\\
    \mathbb{I}_{state}(r) = \begin{cases}
        0, & r \in \mathcal{X}\\
        \infty, & \text{otherwise}
    \end{cases},
\end{gather*}
we write the partial augmented Lagrangian of problem~\eqref{eq:redundant-prob} in terms of the (scaled) dual variable $\nu$
\begin{equation}
    \begin{aligned}
        \mathcal{L}_{\rho}(r, x, u, \nu)
        %
        % =&\;\mathcal{C}(r) + \mathcal{D}(u) + \mathbb{I}_{dyn}(x, u) + \mathbb{I}_{s}(r) + \\
        % &\qquad\qquad\qquad \lambda^\top (r - x) + \frac{\rho}{2} \lVert r - x\rVert_2^2\\
        % %
        % =&\;\mathcal{C}(r) + \mathcal{D}(u) + \mathbb{I}_{dyn}(x, u) + \mathbb{I}_{s}(r) + \\
        % &\qquad\qquad\qquad \frac{\rho}{2} \lVert (r + \frac{1}{\rho}\lambda) - x \rVert_2^2 - \frac{\rho}{2} \lVert \frac{1}{\rho}\lambda \rVert_2^2\\
        % %
        &=\;\mathcal{C}(r) + \mathcal{D}(u) + \mathbb{I}_{dyn}(x, u) + \mathbb{I}_{s}(r) + \frac{\rho}{2} \lVert r + \nu - x \rVert_2^2 - \frac{\rho}{2} \lVert \nu \rVert_2^2.
    \end{aligned}
\end{equation}
%where in the last equality we defined the scaled dual variable $\nu:=\frac{1}{\rho}\lambda$.
% \begin{equation*}
%     \nu:=\frac{1}{\rho}\lambda.
% \end{equation*}
\noindent Applying dual ascent to this augmented Lagrangian, we obtain the following method-of-multiplier updates
\begin{align}
    &
    \begin{aligned}
        \quad (r^+, x^+, u^+) =\underset{x, u, r}{\text{argmin}}\;&\mathcal{C}(r) + \mathcal{D}(u) + \frac{\rho}{2} \lVert r + \nu - x \rVert_2^2\\
        \text{s.t.}\;& x_{t+1} = f(x_t, u_t),\quad\forall t,\\
        &r\in \mathcal{X},\; u\in\mathcal{U},\; x_0 = \xi
    \end{aligned}\label{eq:update-1}\\
    &\quad \nu^+ = \nu + (r^+ - x^+),\label{eq:update-2}
\end{align}
which will converge to locally optimal primal and dual variables $r^*, x^*, u^*, \nu^*$ given mild assumptions on the smoothness and convexity of $\mathcal{C}, \mathcal{D}$ and the constraints in the neighborhood of the optimal point (See \citet[\S 2]{bertsekas2014constrained}).
%which will converge to the optimal primal and dual variables $r^*, x^*, u^*, \nu^*$ given \eqref{eq:redundant-prob} is a convex optimization problem and $r^*, x^*, u^*$ is strictly feasible, and  (See \cite[\S 2]{bertsekas2014constrained}).
% if $r^*, x^*, u^*$ is a strict local minimum, and $\mathcal{L}_0(r^*, x^*, u^*, \lambda^*)$ is $\mathcal{C}^2$ and strictly convex in $r^*, x^*, u^*$ (actually only needs to be convex in the directions $r,x,u$ can move without violating the constraints.

%Since $r^*, x^*, u^*, \nu^*$ is a fixed point of the update rule above, suppose we are given the optimal dual variable $\nu^*$, we can find the optimal trajectory $r^*=x^*$ and the control input $u^*$ by solving \eqref{eq:dual-ascent}-(1) in one shot.
For a layered interpretation, we note that the primal update \eqref{eq:update-1} can be written as a nested optimization problem
\begin{equation}\label{eq:upper-level-opt}
    \begin{aligned}
        r^+ = \underset{r}{\text{minimize}}&\quad\mathcal{C}(r)+ p^*(r + \nu; \xi)\\
        \text{s.t.}&\quad r\in \mathcal{X}
    \end{aligned}
\end{equation}
where $p^*(r +\nu; \xi)$ is the locally optimal value of the $(x, u)$-minimization step
\begin{equation}\label{eq:lower-level-opt}
    \begin{aligned}
        p^*(r+\nu; \xi) = \underset{x, u}{\text{min}}&\quad\mathcal{D}(u) + \frac{\rho}{2} \lVert r + \nu - x \rVert_2^2\\
        \text{s.t.}&\quad x_{t+1} = f(x_t, u_t),\; u \in \mathcal{U},\; \forall t,\\
        &\quad x_0 = \xi.%\\
        %&\hspace{-3em}=\mathcal{D}(u^*) + \frac{\rho}{2} \lVert r + \nu^* - x^* \rVert_2^2
    \end{aligned}
\end{equation}
We immediately recognize that optimal control problem~\eqref{eq:lower-level-opt} is finding the control action $u$ to minimize a quadratic tracking cost for the reference trajectory
\[\tilde{r}:=r + \nu.\]
Thus, this nested rewrite can be seen as breaking the primal minimization problem \eqref{eq:update-1} into a trajectory optimization problem \eqref{eq:upper-level-opt} that seeks to find the best reference $r$ and a tracking problem \eqref{eq:lower-level-opt} that seeks to best track the \textit{perturbed} trajectory $\tilde{r}$. A subtlety here is that the planned trajectory, $r$, and the trajectory sent to the tracking controller, $\tilde{r}$, are different. To understand this discrepancy, let us first consider a similar, but perhaps more intuitive, reference optimization problem:
%\vspace{-5mm}
\begin{equation}\label{eq:heuristic-upper-level-opt}
    \begin{aligned}
        \underset{r}{\text{minimize}}&\quad\mathcal{C}(r)+ p^*(r; \xi)\\
        \text{s.t.}&\quad r\in \mathcal{X}.
    \end{aligned}
\end{equation}
This heuristics-based approach, employed in previous works such as \citet{srikanthan2023data, zhang2024change}, seeks to find a reference that balances minimizing the nominal cost $\mathcal{C}(r)$ and not incurring high tracking cost $p^*(r; \xi)$.
%The tracking cost $p^*(r; \xi)$, usually weighted by a tunable positive constant, can be seen as a regularizer to the trajectory generation problem to encourage trajectories that are easier to follow.
In these works, the solution $r$ is then sent to the tracking controller unperturbed.
%Note that this optimization problem is equivalent to ours in \eqref{eq:upper-level-opt} when the optimal dual variable $\nu^* = 0$.

A problem with this approach is that unless the tracking controller can execute the given reference perfectly, the executed trajectory $x$ will differ from the planned reference $r$. One can mitigate this deviation by multiplying the tracking cost with a large weight, but this can quickly become numerically ill-conditioned, or bias the planned trajectory towards overly conservative and easy-to-track behaviors.
%The tracking cost $p^*(r; \xi)$, usually weighted by a tunable positive constant, can be seen as a regularizer to the trajectory generation problem to encourage trajectories that are easier to follow.
In these works, the solution $r$ is then sent to the tracking controller unperturbed.
%Note that this optimization problem is equivalent to ours in \eqref{eq:upper-level-opt} when the optimal dual variable $\nu^* = 0$.
A problem with this approach is that unless the tracking controller can execute the given reference perfectly, the executed trajectory $x$ will differ from the planned reference $r$. One can mitigate this deviation by multiplying the tracking cost with a large weight, but this can quickly become numerically ill-conditioned, or bias the planned trajectory towards overly conservative and easy-to-track behaviors.

Returning to the method-of-multiplier updates~\eqref{eq:update-1} and~\eqref{eq:update-2}, we note that, under suitable technical conditions, solving the planning layer problem~\eqref{eq:upper-level-opt} using the locally optimal dual variable $\nu^*$ leads to the feasible solution satisfying $r^*=x^*$.  In particular, the perturbed reference trajectory $\tilde{r}^\star=r^\star+\nu^*$ is sent to the tracking controller defined by problem~\eqref{eq:lower-level-opt}, and this results in the executed state trajectory $x^*$ matching the reference $x^* = r^*$.  This discussion highlights the role of the locally optimal dual variable as coordinating the planning and tracking layers, and motivates our approach of explicitly modeling this dual variable in our learning framework.%, given $\nu^*$, solving \eqref{eq:upper-level-opt} and \eqref{eq:lower-level-opt} will lead to better tracking and lower cost than the heuristics-based \eqref{eq:heuristic-upper-level-opt}.
%Taking the dual variable into consideration this way not only improves the tracking performance, but also the cost, which we show empirically in the experiment section.

%To get an intuition about the role of the dual variable $\nu$, we note that \eqref{eq:heuristic-upper-level-opt} and \eqref{eq:upper-level-opt} would agree if the optimal dual variable $\nu^* = 0$. This happens when the reference can be executed perfectly and that tracking the reference perfectly results in the lowest cost of the tracking problem \eqref{eq:lower-level-opt}. When the trajectory $r^*$ cannot be tracked perfectly, the optimal dual variable ensures the executed trajectory $x^*$ agree with $r^*$ by preemptively perturbing the trajectory sent to the low-level controller. Thus, it is important for us to take the dual variable into account when formulating our learning algorithm to plan the reference trajectory to be as close to the executed trajectory as possible.

Following this intuition, in the next section, we show how to parameterize $\pi^\mathrm{plan}$ and $\pi^\mathrm{track}$ to approximately solve \eqref{eq:upper-level-opt} and \eqref{eq:lower-level-opt}, respectively. In practice, finding $\nu^*$ with the iterative update in \eqref{eq:update-2} can be prohibitively expensive. To circumvent this issue, we recognize that any locally optimal dual variable $\nu^*$ can be written as a function of the initial condition $\xi$. We thus seek to learn an approximate map to predict this locally optimal dual variable $\nu^*$ from the initial condition $\xi$.\footnote{We have been somewhat cavalier in our assumption that such a locally optimal dual variable $\nu^*$ exists. We note that notions of local duality theory, see for example~\citep[Ch 14.2]{luenberger1984linear}, guarantee the existence of such a locally optimal dual variable under mild assumptions of local convexity.}

%At a first glance, solving \eqref{eq:upper-level-opt} online would be significantly more difficult compared to solving \eqref{eq:heuristic-upper-level-opt}. This is because to make effective use of the form of $\eqref{eq:upper-level-opt}$, one needs to have access to the optimal dual variable $\nu^*$. To find $\nu*$ with the iterative update in \eqref{eq:dual-ascent} would be prohibitively expensive, both in training and execution, especially when one only has black-box access to the dynamics. The key insight we leverage here is that the optimal dual variable $\nu^*$ here can be expressed as a function of the initial conditon $\xi$. We thus seek to learn to predict the optimal dual variable $\nu^*$ to avoid this iterative procedure \eqref{eq:dual-ascent}. In what follows, we explain how one can use interactions with the environment to learn a map to predict $\nu^*$ while jointly learning a tracking controller to execute the planned trajectory.

We close this section by noting that the above derivation assumes that the reference trajectory is of the same dimension as the state, i.e., that $r_t = x_t$. However, if the state cost $\mathcal{C}$ and constraints $\mathcal{X}$ only require a subset of the states, i.e., if they are defined in terms of $z_t = g(x_t) \in \R^{d_z}$, with $d_z < d_x$, then one can modify the discussion above by replacing the redundant constraint $x = r$ with $z = r$, so that the reference only needs to be specified on the lower dimensional output $z$. We refer the readers to Appendix~\ref{sec:subset-plan} for the details.
\section{Actor-Critic Learning in the Layered Control Architecture}\label{sec:method}
%In this section, we present our parameterization and algorithm for learning the layered policy $\pi=(\pi^\mathrm{plan}, \pi^\mathrm{track})$.
%We first introduce how we parameterize our trajectory generator and tracking controller. We then show how we train the various components in a reinforcement learning in a way similar to the dual ascent update in \eqref{eq:dual-ascent}.

\subsection{Parameterization of the Layered Policy}
We parameterize our layered policy $\pi=(\pi^\mathrm{plan}, \pi^\mathrm{track})$ so that its structure parallels the dual ascent updates \eqref{eq:upper-level-opt} and \eqref{eq:lower-level-opt}. The tracking controller $\pi^\mathrm{track}_\phi: \R^{d_x} \times \R^{Td_x} \to \R^{d_u}$, specified by learnable parameters $\phi$, seeks to approximate a feedback controller that solves the tracking problem \eqref{eq:lower-level-opt}.\footnote{The finite-horizon nature of \eqref{eq:lower-level-opt} calls for a time-varying controller. Thus, the correct $\pi^\mathrm{track}$ and associated value function $p^\pi$ need to be conditioned on the time step $t$. In our experiments, we show that approximating this with a time-invariant controller works well for the time horizons we consider.} The trajectory generator $\pi^\mathrm{plan}_{\theta,\psi}$ seeks to approximately solve the planning problem~\eqref{eq:upper-level-opt}.  It has learnable parameters $\theta$ and $\psi$ and is defined as the solution to the optimization problem
\begin{equation}\label{eq:traj-gen-param}
    \begin{aligned}
        \pi^\mathrm{plan}_{\theta, \psi}(\xi) = \underset{r}{\text{minimize}}&\quad\mathcal{C}(r)+ p_{\psi}^{\pi^\mathrm{track}}(r + v_{\theta}(\xi); \xi)\\
        \text{s.t.}&\quad r\in \mathcal{X}.
    \end{aligned}
\end{equation}
Thus $\pi^\mathrm{plan}$ generates a reference trajectory from initial condition $\xi$ by solving problem~\eqref{eq:traj-gen-param}. The objective of this optimization problem contains two learned components, $v_\theta$ and $p_{\psi}^{\pi^\mathrm{track}}$, specified by parameters $\theta$ and $\psi$, respectively. First, $v_\theta: \R^{d_x} \to \R^{Td_x}$ is a dual network that seeks to predict the locally optimal dual variable $\nu^*$ from initial condition $\xi$. Then, the tracking value function $p_{\psi}^{\pi^\mathrm{track}}: \R^{Td_x} \times \R^{d_x} \to \R$ takes in an initial state $\xi$ and a reference trajectory $r$ and learns to predict the quadratic tracking cost \eqref{eq:lower-level-opt} that the policy $\pi^\mathrm{track}$ will incur on this reference trajectory. Summarizing, our layered policy consists of three learned components: the dual network $v_\theta$, the low-layer tracking policy $\pi^\mathrm{track}_\phi$, and its associated value function $p_{\psi}^{\pi^\mathrm{track}}$. In what follows, we explain how we learn the tracking value function $p_{\psi}^{\pi^\mathrm{track}}$ and policy $\pi^\mathrm{track}_\phi$ jointly via the actor-critic method, and how to update the dual network $v_\theta$ in a way similar to dual ascent.

\subsection{Learning the Tracking Controller via Actor-Critic Method} \label{sec:learning-tracking-algorithm}
We use the actor-critic method to jointly learn the tracking value function $p_{\psi}^{\pi^\mathrm{track}}$ and policy $\pi^\mathrm{track}_\phi$. We are learning a deterministic policy and its value function, a setting that has been extensively explored and for which many off-the-shelf algorithms exist \citep{silver2014deterministic, lillicrap2015continuous, fujimoto2018addressing}. In what follows, we specify the RL problem for learning the tracking controller and treat the actor-critic algorithm as a black-box solver for finding our desired parameters $\phi$ and $\psi$.

We define an augmented system with the state $x^{aug}_t = (x_t, \mathbf{r}_t)^\top \in \R^{(H+1)d_x}$, which concatenates $x_t$ with a $H$-step reference trajectory $\mathbf{r}_t=(r_{t}^\top\; r_{t+1}^\top\; \cdots\; r_{t+H-1}^\top)^\top$, where $H \in \mathbb{Z}^+$ specifies the tracking controller's horizon of look-ahead. The augmented state transitions are then given by
\begin{equation}\label{eq:augmented-state}
    x^{aug}_{t+1} %= \begin{bmatrix} x_{t+1} \\ \mathbf{r}_{t+1} \end{bmatrix}
    %= f^{aug}\left(\begin{bmatrix} x_{t} \\ \mathbf{r}_{t} \end{bmatrix}, u_t \right)
    = \begin{bmatrix} f(x_t, u_t) \\ \mathcal{Z}\mathbf{r}_t \end{bmatrix}, t=1,...T,
\end{equation}
where $\mathcal{Z}$ is a block-upshift operator that shifts the reference trajectory forward by one timestep. The cost of the augmented system $c^{aug}$ is chosen to match the tracking optimization problem \eqref{eq:lower-level-opt}, i.e., we set
\begin{equation}\label{eq:tracking-stage-cost}
    c^{aug}(x^{aug}_t, u_t) = \frac{\rho}{2}\norm{x_{t+1} - r_{t+1}}_2^2 + \mathcal{D}_t(u_t).
\end{equation}
The initial condition $x^{aug}_0$ is found by first sampling $\xi \sim D_\xi$, and then setting $\mathbf{r}_0$ to the first $H$ steps of the reference generated by $\pi^\mathrm{plan}(\xi)$. We then run the actor-critic algorithm on this augmented system to jointly learn $p_{\psi}^{\pi^\mathrm{track}}$ and $\pi^\mathrm{track}_\phi$.

\subsection{Learning the Dual Network} \label{sec:learning-dual-algorithm}
We design our dual network update as an iterative procedure that mirrors the dual ascent update step~\eqref{eq:update-2}, which moves the dual variable in the direction of the mismatch between reference $r^+$ and execution $x^+$. At each iteration, we sample a batch of initial conditions $\{\xi_i\}_{i=1}^B$, and for each $\xi_i$, we solve the planning problem \eqref{eq:traj-gen-param} with current parameters $\phi$ and $\theta$ to obtain reference trajectories $r_i = \pi^\mathrm{plan}_{\theta^{(k)}, \psi}(\xi).$  We then send the perturbed trajectories $\tilde{r}_i = r_i + v_\theta(\xi_i)$ to the tracking controller to obtain the executed trajectories
\begin{equation*}
    x_{i, t+1} = f(x_{i,t}, \pi^\mathrm{track}_\phi(x_{i,t}, \tilde{\mathbf{r}}_{i,t})),\quad t=0, ...,T.
\end{equation*}
Similar to the dual ascent step, we then perform a gradient ascent step in $\theta$ to move $v_{\theta^{(k)}}(\xi_i)$ in the direction of $r_i-x_i$:
\begin{equation}\label{eq:dual-update}
\begin{aligned}
    \theta^{+} &\leftarrow \theta + \eta \left( \nabla_\theta \sum_{i=1}^{B} \frac{1}{B} \left(r_i - x_i\right)^\top v_\theta(\xi_i) \right)\\
    &= \theta + \eta \sum_{i=1}^{B} \frac{1}{B} \left(r_i - x_i\right)^\top J_{v,\theta}(\xi_i; \theta),
\end{aligned}
\end{equation}
where $J_{v, \theta}$ denotes the Jacobian of $v$ w.r.t. $\theta$. Note that even though $r_i$ and $x_i$ implicitly depend on $\theta$, similar to the dual ascent step \eqref{eq:update-2}, we do not differentiate through these two terms when computing this gradient. In the next section, we show that for the case of linear quadratic regulators, this update for the dual network parameter $\theta$ converges to the vicinity of the optimal parameter $\theta^*$ if the tracking problem is solved to sufficient accuracy.

\subsection{Summary of the Algorithm}
We summarize our algorithm in Algorithm \ref{alg:bi-actor-critic}. The outer loop of the algorithm (Line 1-9) corresponds to the dual update procedure described in Section \ref{sec:learning-dual-algorithm}. Within each iteration of the outer loop, we also run the actor-critic algorithm to update the tracking policy $\pi_{\phi^{(k)}}^\mathrm{track}$ and its value function $p^\pi_\psi$ (Line 5-8). Note that we do not wait for the tracking controller to converge before starting the dual update. In Section~\ref{sec:experiment}, we empirically validate that dual learning can start to make progress even when the tracking controller is still suboptimal. After the components are learned for the specified iterations, we directly apply the learned policy $\pi^\mathrm{plan}, \pi^\mathrm{track}$ for any new initial condition $\xi$.

\begin{algorithm}[h]
    \caption{Layered Actor-Critic}\label{alg:bi-actor-critic}
    %\KwData{Simulator $sim$, Objectives $\mathcal{C}$, $\mathcal{D}$}
    \KwResult{Policy parameters $\phi, \psi, \theta$}
    
    %Initialize replay buffer $rb=\{\}$\;
    \For{$k = 1, ..., K$}{
        Sample a batch of initial conditions $\{\xi^{(k)}_i\}_{i=1}^{B}$\;
        Predict the optimal dual variables $\hat{\nu}^{(k)}_i = v_{\theta^{(k)}}(\xi_i^{(k)})$\;
        Solve \eqref{eq:traj-gen-param} to find reference trajectories $\{r_i^{(k)}\}_{i=1}^{B}$\;
        Construct augmented state $x^{aug}_{i,0} = [\xi_i, r_i^{(k)}]^\top$\;
        \For{$t=0, ..., T-1$}{
            Roll augmented dynamics forward with $\pi_{\phi^{(k)}}^\mathrm{track}$ to get $\{x_{i,t+1}^{(k)}\}_{i=1}^{B}$\;
            Update $\pi_{\phi}^\mathrm{track}$ and $p^\pi_\psi$ with observed transition using actor-critic algorithm\;
            %Find action $u_{i,t}^{(k)} = \pi^\mathrm{track}(x_{i,t}^{(k)}, r_{i,t}^{(k)})$\;
            %Step $sim$ to get $x_{i,t+1}^{(k)}$\;
            %Compute tracking cost $c_{i, t}^{(k)}$ per \eqref{eq:tracking-stage-cost}\;
            %Add $((x_{i,t}^{(k)}, r_{i,t}^{(k)}), u_{i,t}^{(k)}, (x_{i,t+1}^{(k)}, r_{i,t+1}^{(k)}), c_{i,t+1}^{(k)})$ to $rb$\;
            %Sample from the replay buffer to update $\phi$ and $\psi$ as per the actor-critic algorithm\;
        }
        Update the dual network parameter per \eqref{eq:dual-update}\;
    }
\end{algorithm}
\section{Analysis for Linear Quadratic Regulator}\label{sec:analysis}
In this section, we consider the unconstrained linear quadratic regulator (LQR) problem and show that our method learns to predict the optimal dual variable if we solve the tracking problem well enough. We focus on the dual update because the tracking problem \eqref{eq:lower-level-opt} reduces to standard LQR, to which existing results \citep{bradtke1994adaptive, tu2018least} are readily applicable. In what follows, we define the problem we analyze, and first show that dual network updates of the form \eqref{eq:dual-update} converge to the optimal dual map if one perfectly solves the planning \eqref{eq:upper-level-opt} and tracking problem \eqref{eq:lower-level-opt}. We then present a robustness result which shows that the algorithm will converge to the vicinity of the optimal dual variable if we solve the tracking problem with a small error.

We consider the instantiation of \eqref{eq:redundant-prob} with the dynamics
\begin{equation}\label{eq:linear-dynamics}
    x_{t+1} = f(x_t, u_t) = A x_t + B u_t
\end{equation}
\vspace{-2.5mm}
and cost functions
\begin{equation}\label{eq:quadratic-cost}
\begin{gathered}
    \mathcal{C}(r) = \sum_{t=0}^{T} r_t^\top Q r_t =: r^\top \mathcal{Q} r,\\
    \mathcal{D}(u) = \sum_{t=0}^{T-1} u_t^\top R u_t =: u^\top \mathcal{R} u,
\end{gathered}
\end{equation}
where $Q \succeq 0, R \succ 0$, $\mathcal{Q} = I_{T} \otimes Q$ and $\mathcal{R} = I_{T-1} \otimes R$. States and control inputs are unconstrained, i.e., $\mathcal{X} = \R^{d_x}, \mathcal{U} = \R^{d_u}$. The initial condition $\xi$ is sampled i.i.d. from the standard normal distribution $\mathcal{N}(0, I)$.

In this case, strong duality holds, and the optimal dual variable\footnote{If not further specified, when we refer to $\nu$ or the dual variable, we mean the dual variable associated with the constraint $r=x$ in problem \eqref{eq:redundant-prob}} $\nu^*$ is a linear function of the initial condition $\xi$. (See Lemma \ref{lem:dual-existence} in Appendix \ref{sec:appendix-proof-opt}.) We thus parameterize the dual network as a linear map
\begin{equation}\label{eq:lqr-dual-pred}
    v_\theta(\xi) = \Theta \xi.
\end{equation}

\subsection{With Optimal Tracking}
We first consider the following update rule, wherein we assume that the planning \eqref{eq:upper-level-opt} and tracking problems \eqref{eq:lower-level-opt} are solved optimally. At each iteration, we first sample a minibatch of initial conditions $\{\xi^{(k)}_i\}_{i=1}^{B}, \, \xi_i^{(k)}\overset{i.i.d.}{\sim} \mathcal{N}(0, I)$, and use the current $\Theta^{(k)}$ to predict the optimal dual variable
\begin{equation*}
    \hat{v}^{(k)}_i = v_{\theta^{(k)}}(\xi_i) = \Theta^{(k)} \xi_i.
\end{equation*}
We assume we perfectly solve the trajectory optimization problem
\begin{equation}\label{eq:opt-traj-opt-update}
    r^{(k)}_i = \underset{r}{\text{argmin}}\ r^\top \mathcal{Q} r + p^*(r + \hat{v}_i^{(k)}; \xi_i),
\end{equation}
where $p^*(\cdot)$ is the optimal value of the tracking problem
\begin{equation}\label{eq:opt-tracking-update}
\begin{aligned}
        x^{(k)}_i, u^{(k)}_i = \underset{x, u}{\text{argmin}}&\ u^\top \mathcal{R} u + \frac{\rho}{2} \lVert r^{(k)}_i + \hat{v}^{(k)}_i - x\rVert_2^2\\
        \text{s.t.}\quad& x_{t+1} = Ax_t + Bu_t, \quad x_0 = \xi.
\end{aligned}
\end{equation}
This is a standard LQR optimal control problem, and closed-form expressions for the optimizers and the value function are readily expressed in terms of the solution to a discrete algebraic Riccati equation.

After solving \eqref{eq:opt-tracking-update}, we update the dual map $\Theta$ as
\begin{equation}\label{eq:lqr-dual-update}
    \begin{aligned}    
    \Theta^{(k+1)} &= \Theta^{(k)} + \eta \nabla_\Theta \left( \sum_{i=1}^{B} \frac{1}{B} \left(r^{(k)}_i - x^{(k)}_i\right)^\top v_{\theta^{(k)}}(\xi_i)\right)\\
    &=  \Theta^{(k)} + \eta\sum_{i=1}^{B} \frac{1}{B} \left(r^{(k)}_i - x^{(k)}_i\right) \xi_i^\top
    \end{aligned}
\end{equation}

A feature of this update rule is that the difference between the reference $r_i$ and the executed trajectory $x_i$ can be written out in closed form as follows.

\begin{lemma}\label{lem:rx-expression-simple}
Given the update rules \eqref{eq:opt-traj-opt-update}, \eqref{eq:opt-tracking-update}, the difference between the updates $r^{(k)}_i$ and $x^{(k)}_i$ can be written as a linear map of the initial condition $\xi$ as
\begin{equation*}
    r^{(k)}_i  - x^{(k)}_i  = H \Theta^{(k)} \xi_i + G \xi_i,
\end{equation*}
where $H$ and $G$ are matrices of appropriate dimensions that depend on $A, B, Q, R$, and $H$ is symmetric negative definite.  See Lemma~\ref{lem:rx-expression} in Appendix~\ref{sec:appendix-proof-opt} for definitions of $H$ and $G$.
\end{lemma}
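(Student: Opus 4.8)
The plan is to reduce both sub-problems \eqref{eq:opt-traj-opt-update} and \eqref{eq:opt-tracking-update} to closed form, substitute $\hat v^{(k)}_i = \Theta^{(k)}\xi_i$, and read off $H$ and $G$ by collecting terms. First I would condense the dynamics: stacking $x_{t+1}=Ax_t+Bu_t$ with $x_0=\xi$ gives $x = \mathcal{A}\xi + \mathcal{B}u$ for the usual block matrices $\mathcal{A}$ (the vertical stack of $I, A, \dots, A^T$) and the block-lower-triangular convolution matrix $\mathcal{B}$ built from $A,B$. Since $R\succ 0$, problem \eqref{eq:opt-tracking-update} becomes an unconstrained strictly convex quadratic program in $u$; solving its normal equations and substituting back yields
\[
x^{(k)}_i = P\,\tilde r_i + (I-P)\mathcal{A}\xi_i, \qquad \tilde r_i := r^{(k)}_i + \hat v^{(k)}_i,
\]
where $P := \rho\,\mathcal{B}\big(2\mathcal{R}+\rho\,\mathcal{B}^\top\mathcal{B}\big)^{-1}\mathcal{B}^\top$ is symmetric positive semidefinite and depends only on $A,B,R$ (and $\rho$). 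This is the closed-form LQR response noted after \eqref{eq:opt-tracking-update}; equivalently it can be written via the associated discrete algebraic Riccati equation.

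Next I would pin down the optimal tracking value $p^*$ and two structural facts about $P$. Completing the square in $u$ in the inner minimization shows, with $e := \tilde r - \mathcal{A}\xi$, that
\[
p^*(\tilde r;\xi) \;=\; \min_u\Big\{\tfrac{\rho}{2}\lVert e-\mathcal{B}u\rVert_2^2 + u^\top\mathcal{R}u\Big\} \;=\; \tfrac{\rho}{2}\,e^\top(I-P)\,e ,
\]
and the very same computation gives $\lVert y\rVert_2^2 - y^\top P y = \min_u\{\lVert y-\mathcal{B}u\rVert_2^2 + \tfrac{2}{\rho}u^\top\mathcal{R}u\}$, which is strictly positive for $y\neq 0$ because $R\succ 0$ (the minimum is zero only if some $u$ satisfies both $\mathcal{B}u=y$ and $u^\top\mathcal{R}u=0$, forcing $u=0$ and $y=0$). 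Hence $I-P\succ 0$. Writing $W := \tfrac{\rho}{2}(I-P)\succ 0$, we also get $\mathcal{Q}+W\succ 0$ since $\mathcal{Q}\succeq 0$.

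Plugging $p^*(r+\hat v^{(k)}_i;\xi_i) = (r+\hat v^{(k)}_i-\mathcal{A}\xi_i)^\top W (r+\hat v^{(k)}_i-\mathcal{A}\xi_i)$ into \eqref{eq:opt-traj-opt-update} makes it an unconstrained strictly convex QP in $r$, whose first-order condition gives $r^{(k)}_i = (\mathcal{Q}+W)^{-1}W(\mathcal{A}\xi_i - \hat v^{(k)}_i)$. Substituting $\hat v^{(k)}_i = \Theta^{(k)}\xi_i$ into this and into the expression for $x^{(k)}_i$, subtracting, and grouping the terms multiplying $\Theta^{(k)}\xi_i$ versus those multiplying $\xi_i$ (using the identity $(\mathcal{Q}+W)^{-1}W - I = -(\mathcal{Q}+W)^{-1}\mathcal{Q}$) yields
\[
H = -\big[(I-P)(\mathcal{Q}+W)^{-1}W + P\big], \qquad G = -(I-P)(\mathcal{Q}+W)^{-1}\mathcal{Q}\,\mathcal{A},
\]
both depending only on $A,B,Q,R$.

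The last step — and the one I expect to be the real obstacle — is showing $H$ is symmetric negative definite, which is not visible from the product form above. The trick is to use $W = \tfrac{\rho}{2}(I-P)$ to rewrite $(I-P)(\mathcal{Q}+W)^{-1}W = \tfrac{\rho}{2}(I-P)(\mathcal{Q}+W)^{-1}(I-P)$: this is manifestly symmetric, and, being a congruence of the positive-definite matrix $(\mathcal{Q}+W)^{-1}$ by the invertible matrix $I-P$, it is positive definite. Adding the symmetric positive semidefinite term $P$ preserves symmetry and positive definiteness, so $-H$ is symmetric positive definite, i.e., $H$ is symmetric negative definite. I would flag that the entire argument hinges on (i) identifying the Hessian of the tracking value with $\tfrac{\rho}{2}(I-P)$ and (ii) $I-P\succ 0$, both of which ultimately rest on $R\succ 0$; a small amount of care is also needed to confirm $\mathcal{Q}+W$ is invertible so that $r^{(k)}_i$ is well-defined (again supplied by $W\succ 0$).
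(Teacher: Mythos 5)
Your proof is correct and follows essentially the same route as the paper's (Lemma~\ref{lem:rx-expression} in Appendix~\ref{sec:appendix-proof-opt}): stack the dynamics, solve the tracking and planning problems as unconstrained QPs in closed form, substitute $\hat v^{(k)}_i=\Theta^{(k)}\xi_i$, subtract, and read off $H$ and $G$ — your $W$ is the paper's $P$, your $\mathcal{A},\mathcal{B}$ are its $F,E$, and your $H$ coincides exactly with the paper's $-\frac{2}{\rho}P(Q+P)^{-1}P-\frac{\rho}{2}E\bar{R}^{-1}E^\top$. If anything your write-up is slightly more careful: the variational argument for $I-P\succ 0$ fills in a step the paper only asserts via ``partial minimization preserves convexity'' (which alone gives only $\succeq 0$), and your sign for $G$ (equal to $-(HF+F)$ in the paper's notation) is the one consistent with the claimed identity $r-x=H\Theta\xi+G\xi$, whereas the appendix's displayed definition $G:=HF+F$ contains a sign slip relative to its own derivation.
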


We leverage Lemma~\ref{lem:rx-expression-simple}, and that the matrix $H$ is negative definite, to show that the updates~\eqref{eq:lqr-dual-pred}-\eqref{eq:lqr-dual-update} make progress in expectation.
%For this update rule where we solve both trajectory planning and tracking perfectly as in \eqref{eq:opt-traj-opt-update} and \eqref{eq:opt-tracking-update}, we show that the following is true.
\begin{theorem}\label{thm:opt-tracking}
    Consider the cost functions \eqref{eq:quadratic-cost} and dynamics \eqref{eq:linear-dynamics}, and fix an initial $\Theta^{(0)}$. 
    %by repeatedly applying the update rules \eqref{eq:lqr-dual-pred}-\eqref{eq:lqr-dual-update}, the iterate $\Theta^{(k)}$ converges to the optimal $\Theta^*$ exponentially. Specifically, 
    Fix a step size ${\eta = \frac{2}{\smax(H) - \smin(H)}}$ and mini-batch size ${B > \frac{2d_x\smax^2(H)}{\smin^2(H)}}$.   The iterates generated by the  updates~\eqref{eq:lqr-dual-pred}-\eqref{eq:lqr-dual-update} satisfy
    \begin{equation*}
        \mathbb{E} \norm{\Theta^{(k)} - \Theta^{*}}_2 \leq \gamma^k \mathbb{E} \norm{\Theta^{(0)} - \Theta^{*}}_2,
    \end{equation*}
    where $\gamma\in(0,1)$ is a function of $\eta$, $H$, $B$, and $d_x$.
\end{theorem}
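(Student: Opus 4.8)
The plan is to view the update~\eqref{eq:lqr-dual-update} as a noisy linear iteration on the matrix $\Theta$ and to show that the expected linear operator is a contraction. First I would substitute the closed-form expression from Lemma~\ref{lem:rx-expression-simple}, namely $r_i^{(k)} - x_i^{(k)} = H\Theta^{(k)}\xi_i + G\xi_i$, into the update to obtain
\begin{equation*}
    \Theta^{(k+1)} = \Theta^{(k)} + \eta \left( H\Theta^{(k)} + G \right)\left( \frac{1}{B}\sum_{i=1}^B \xi_i\xi_i^\top \right).
\end{equation*}
Since the $\xi_i$ are i.i.d.\ standard normal, $\mathbb{E}\left[\frac{1}{B}\sum_i \xi_i\xi_i^\top\right] = I$, so taking conditional expectation given $\Theta^{(k)}$ yields $\mathbb{E}[\Theta^{(k+1)}\mid\Theta^{(k)}] = (I + \eta H)\Theta^{(k)} + \eta G$. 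The fixed point of this mean recursion is $\Theta^\star = -H^{-1}G$ (well-defined because $H$ is negative definite, hence invertible); one should check that this $\Theta^\star$ indeed coincides with the optimal dual map $\Theta^*$ from Lemma~\ref{lem:dual-existence}, which follows because $\Theta^*$ is exactly the parameter at which $r_i - x_i = 0$ identically in $\xi$, i.e.\ $H\Theta^* + G = 0$.

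Next I would center the iteration. Writing $E^{(k)} := \Theta^{(k)} - \Theta^\star$ and using $H\Theta^\star + G = 0$, the update becomes
\begin{equation*}
    E^{(k+1)} = E^{(k)} + \eta H E^{(k)} \hat{\Sigma}^{(k)}, \qquad \hat{\Sigma}^{(k)} := \frac{1}{B}\sum_{i=1}^B \xi_i\xi_i^\top = I + \Delta^{(k)},
\end{equation*}
so $E^{(k+1)} = (I + \eta H)E^{(k)} + \eta H E^{(k)}\Delta^{(k)}$, where $\Delta^{(k)}$ is a zero-mean sample-covariance fluctuation independent of $E^{(k)}$. Taking the spectral norm, using submultiplicativity and the triangle inequality, and then taking expectations conditioned on $E^{(k)}$,
\begin{equation*}
    \mathbb{E}\|E^{(k+1)}\|_2 \le \left( \|I+\eta H\|_2 + \eta\|H\|_2\,\mathbb{E}\|\Delta^{(k)}\|_2 \right)\|E^{(k)}\|_2.
\end{equation*}
For the deterministic factor, since $H$ is symmetric negative definite its eigenvalues lie in $[\smin(H),\smax(H)]\subset(-\infty,0)$ — here $\smax(H)<0$ — and the choice $\eta = \tfrac{2}{\smax(H)-\smin(H)}$ is exactly the value that minimizes $\|I+\eta H\|_2 = \max_{\lambda\in[\smin,\smax]}|1+\eta\lambda|$, giving $\|I+\eta H\|_2 = \tfrac{-(\smin(H)+\smax(H))}{\smax(H)-\smin(H)}$, a quantity strictly less than $1$. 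For the stochastic factor I would invoke a standard nonasymptotic bound on the deviation of a Gaussian sample covariance: $\mathbb{E}\|\Delta^{(k)}\|_2 = \mathbb{E}\bigl\|\tfrac1B\sum_i\xi_i\xi_i^\top - I\bigr\|_2 \lesssim \sqrt{d_x/B}$, and the hypothesis $B > 2d_x\smax^2(H)/\smin^2(H)$ is the quantitative threshold that makes $\eta\|H\|_2\,\mathbb{E}\|\Delta^{(k)}\|_2$ small enough that the total factor $\gamma := \|I+\eta H\|_2 + \eta\|H\|_2\,\mathbb{E}\|\Delta^{(k)}\|_2$ stays in $(0,1)$. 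Unrolling the recursion over $k$ and using the tower property gives $\mathbb{E}\|E^{(k)}\|_2 \le \gamma^k\,\mathbb{E}\|E^{(0)}\|_2$, which is the claim with $\gamma$ depending only on $\eta$, $H$, $B$, and $d_x$.

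The main obstacle I anticipate is making the stochastic term quantitative with the precise constant that matches the stated batch-size threshold: one needs a clean bound of the form $\mathbb{E}\|\hat\Sigma - I\|_2 \le c\sqrt{d_x/B}$ with an explicit $c$ (or a two-sided high-probability bound on the singular values of the Gaussian design matrix, integrated to get the expectation), and then verify that $B > 2d_x\smax^2(H)/\smin^2(H)$ forces $\eta\|H\|_2 c\sqrt{d_x/B} < 1 - \|I+\eta H\|_2$. A secondary subtlety is the conditioning/independence bookkeeping: $\Delta^{(k)}$ must be independent of the filtration generated by $\xi^{(0)},\dots,\xi^{(k-1)}$ so that the per-step contraction factors multiply cleanly under expectation — this is immediate from the i.i.d.\ sampling of fresh minibatches at each iteration, but it should be stated explicitly. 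Everything else (invertibility of $H$, identification of $\Theta^\star$ with the optimal dual map, optimality of the step size) is routine given Lemma~\ref{lem:rx-expression-simple} and Lemma~\ref{lem:dual-existence}.
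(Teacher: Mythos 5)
Your proposal follows essentially the same route as the paper: substitute the closed form from Lemma~\ref{lem:rx-expression-simple}, use the fixed-point identity $H\Theta^* + G = 0$ to center the iteration as $(I+\eta H)(\Theta^{(k)}-\Theta^*) + \eta H(\Theta^{(k)}-\Theta^*)(\hat\Sigma - I)$, and bound the deterministic and stochastic factors separately; the one step you flag as open --- an explicit constant in $\mathbb{E}\lVert\hat\Sigma - I\rVert \le c\sqrt{d_x/B}$ --- is supplied in the paper by observing that $B\hat\Sigma$ is Wishart and applying Jensen's inequality to get $\mathbb{E}\lVert\hat\Sigma - I\rVert_F \le \sqrt{2d_x/B}$, which produces exactly the stated batch-size threshold $B > 2d_x\smax^2(H)/\smin^2(H)$. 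One small correction: working with singular values of the negative definite $H$, the minimizing step size is $\eta = 2/(\smax(H)+\smin(H))$ (the minus sign in the theorem statement is a typo, fixed in the appendix), which gives $\lVert I+\eta H\rVert_2 = (\smax(H)-\smin(H))/(\smax(H)+\smin(H)) < 1$; the expression you wrote for this quantity is its reciprocal and would exceed one.
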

\begin{proof}
    See Appendix \ref{sec:appendix-proof-opt}.
\end{proof}
% should we specify eta in terms of H? but H is not interpretable...
\vspace{-2mm}
\subsection{With Suboptimal Tracking}\label{sec:analysis-subopt}
%The analysis in the previous subsection is achieved only when one learns the tracking controller $\pi^\mathrm{track}$ and its value function $p^\pi$ perfectly.In this section, we consider the case where the learned tracking controller is not optimal. In this case, we show that if the learned tracking controller is close to optimal, one can guarantee that the dual map $\Theta$ will converge to a small ball around $\Theta^*$, where the radius of the ball depends on the error of the learned tracking controller.
We consider the case where we only have approximate solutions to the updates \eqref{eq:opt-tracking-update} and \eqref{eq:opt-traj-opt-update}. We leverage the structural properties of the LQR problem, and parameterize the optimal tracking controller as a linear map, and its value function as a quadratic function of the augmented state. Denote $F\xi$ as the open-loop response of initial condition $\xi$, we consider perturbations in the optimal value function $p^*$ as
\begin{equation}\label{eq:r-update-perturb}
    \hat{p}(\tilde{r}, \xi) = p^*(\tilde{r}; \xi) + (\tilde{r} - F\xi) \Delta_P (\tilde{r} - F\xi),
\end{equation}
and perturbations in the control action as
\begin{equation}\label{eq:u-update-perturb}
    \hat{u}(\tilde{r}, \xi) = u^*(\tilde{r}, \xi) + \Delta_{u,r}\tilde{r} + \Delta_{u, \xi} \xi
\end{equation}
where $u^*$ denotes the $u$ solution of \eqref{eq:opt-tracking-update}. We note that the perturbations $\Delta_P, \Delta_{u,r}, \Delta_{u, x_0}$ represent the difference between learned and optimal policies, and have been shown to decay with the number of transitions used for training \citep{bradtke1994adaptive, tu2018least}. Perturbation analysis on Theorem \ref{thm:opt-tracking} shows that if the learned controller is close to optimal, the dual map $\Theta$ will converge to a small ball around $\Theta^*$, where the radius of the ball depends on the error of the learned tracking controller.  Due to space constraints, we present an informal version of this result here, and relegate a precise statement and proof to Appendix~\ref{sec:appendix-proof-subopt}.
\begin{theorem} \label{thm:subopt-tracking}(informal)
Consider the dynamics \eqref{eq:linear-dynamics} and cost \eqref{eq:quadratic-cost}. Consider the update rules \eqref{eq:lqr-dual-pred}-\eqref{eq:lqr-dual-update} with the perturbations \eqref{eq:r-update-perturb} and \eqref{eq:u-update-perturb}. Denote the size of the perturbations as $\epsilon_P = \norm{\Delta_P}, \epsilon_{u,r} = \norm{\Delta_{u,r}}, \epsilon_{u,\xi} = \norm{\Delta_{u,\xi}}$. Given any $\Theta^{(0)}$, if the perturbations $\epsilon_P, \epsilon_{u,r}$ are sufficiently small, there exist step size $\eta$ and batch size $B$ such that
\vspace{-1mm}
\begin{equation*}
    \E \norm{\Theta^{(k)} - \Theta^*} \leq \gamma^k \E \norm{\Theta^{(0)} - \Theta^*} + \frac{1-\gamma^k}{1-\gamma} e(\epsilon_P, \epsilon_{u,r}, \epsilon_{u,\xi}),
\end{equation*}
where $0 < \gamma < 1$, $e(\epsilon_P, \epsilon_{u,r}, \epsilon_{u,\xi})$ is an error term depending polynomially on its arguments.
\end{theorem}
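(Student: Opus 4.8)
The plan is to reduce the suboptimal-tracking case to a perturbed version of the fixed-point iteration already analyzed in Theorem~\ref{thm:opt-tracking}. The first step is to generalize Lemma~\ref{lem:rx-expression-simple} to the perturbed updates. Because in the LQR setting the tracking value function $p^*(\cdot\,;\xi)$ is quadratic and the optimal tracking control $u^*(\cdot,\xi)$ is affine in its arguments, the perturbed objects $\hat{p}$ and $\hat{u}$ in \eqref{eq:r-update-perturb}--\eqref{eq:u-update-perturb} remain quadratic and affine, respectively. Hence the perturbed planning solution $r_i^{(k)}$ of \eqref{eq:opt-traj-opt-update} is still an affine function of $\hat{v}_i^{(k)} = \Theta^{(k)}\xi_i$ and $\xi_i$, and the executed trajectory $x_i^{(k)}$ generated by the perturbed control $\hat{u}(\tilde{r}_i^{(k)},\xi_i)$ through \eqref{eq:linear-dynamics} is likewise affine in $\Theta^{(k)}\xi_i$ and $\xi_i$. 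Collecting terms, I would establish the perturbed analogue of Lemma~\ref{lem:rx-expression-simple},
\[
    r_i^{(k)} - x_i^{(k)} = \widehat{H}\,\Theta^{(k)}\xi_i + \widehat{G}\,\xi_i, \qquad \widehat{H} = H + \Delta_H, \quad \widehat{G} = G + \Delta_G,
\]
where $H,G$ are the matrices of Lemma~\ref{lem:rx-expression-simple}, and, by expanding the Riccati-based closed forms of \eqref{eq:opt-traj-opt-update}--\eqref{eq:opt-tracking-update} and using submultiplicativity, $\norm{\Delta_H}$ and $\norm{\Delta_G}$ are bounded by polynomials in $(\epsilon_P,\epsilon_{u,r},\epsilon_{u,\xi})$ that vanish as the perturbations vanish. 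A key bookkeeping point is that $\Delta_H$ depends only on $\epsilon_P$ and $\epsilon_{u,r}$: the additive term $\Delta_{u,\xi}\xi$ feeds into $x_i^{(k)}$ only through $\xi_i$, hence only into $\widehat{G}$.

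Next I would substitute this expression into the dual update \eqref{eq:lqr-dual-update} and write $\widehat{\Sigma}_k := \tfrac{1}{B}\sum_{i=1}^B \xi_i\xi_i^\top = I + E_k$, where $\E[E_k] = 0$ and $E_k$ is independent of $\Theta^{(k)}$. Using the optimality relation $H\Theta^* + G = 0$ (so that $\Theta^*$ is the fixed point of the unperturbed iteration, by Theorem~\ref{thm:opt-tracking}) and the algebraic split $\widehat{H}\Theta^{(k)} + \widehat{G} = \widehat{H}(\Theta^{(k)} - \Theta^*) + (\Delta_H\Theta^* + \Delta_G)$, the update rearranges into
\[
    \Theta^{(k+1)} - \Theta^* = (I + \eta\widehat{H})(\Theta^{(k)} - \Theta^*) + \eta\widehat{H}(\Theta^{(k)} - \Theta^*)E_k + \eta(\Delta_H\Theta^* + \Delta_G)(I + E_k).
\]
Taking norms, expectations, and using independence of $E_k$ and $\Theta^{(k)}$ together with the Gaussian sample-covariance estimate $\E\norm{E_k} \lesssim \sqrt{d_x/B}$ yields a scalar recursion $\E\norm{\Theta^{(k+1)} - \Theta^*} \le \gamma\,\E\norm{\Theta^{(k)} - \Theta^*} + e$, whose contraction factor $\gamma$ is governed by $\norm{I + \eta\widehat{H}}$ plus a noise correction that vanishes as $B \to \infty$, and whose additive term is $e = \eta\,\norm{\Delta_H\Theta^* + \Delta_G}\,(1 + \E\norm{E_k})$. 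Since for $\epsilon_P,\epsilon_{u,r}$ small enough $\widehat{H}$ remains symmetric negative definite, the choices of step size $\eta$ and batch size $B$ from the proof of Theorem~\ref{thm:opt-tracking}, applied with $\widehat{H}$ in place of $H$, make $\gamma \in (0,1)$. Unrolling the recursion gives $\E\norm{\Theta^{(k)} - \Theta^*} \le \gamma^k\E\norm{\Theta^{(0)} - \Theta^*} + \tfrac{1-\gamma^k}{1-\gamma}\,e$; and since $H,G,\Theta^*$ are fixed while $\Delta_H,\Delta_G$ are polynomial in the perturbations, $e = e(\epsilon_P,\epsilon_{u,r},\epsilon_{u,\xi})$ has the claimed polynomial form.

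The hard part will be the first step: carefully propagating the three perturbations $\Delta_P$, $\Delta_{u,r}$, $\Delta_{u,\xi}$ through the Riccati-defined closed-form solutions of \eqref{eq:opt-traj-opt-update} and \eqref{eq:opt-tracking-update} to obtain $\widehat{H}$ and $\widehat{G}$ with clean polynomial bounds, and in particular pinning down a quantitative threshold on $\epsilon_P,\epsilon_{u,r}$ below which $\widehat{H}$ is guaranteed negative definite, so that $\widehat{H}^{-1}$, the perturbed fixed point, and the associated step size all exist and vary continuously with the perturbations. Once that perturbed version of Lemma~\ref{lem:rx-expression-simple} is in hand, the remaining steps --- the expectation decomposition, the variance bound on $\E\norm{E_k}$, the selection of $\eta$ and $B$, and the unrolling --- are direct analogues of the argument behind Theorem~\ref{thm:opt-tracking}. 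The precise statement of the perturbation hypotheses and the full computation are deferred to Appendix~\ref{sec:appendix-proof-subopt}.
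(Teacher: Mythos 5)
Your proposal follows essentially the same route as the paper's proof in Appendix~\ref{sec:appendix-proof-subopt}: reduce the perturbed updates to $r-x = (H+\Delta_H)\Theta\xi + (G+\Delta_G)\xi$ with $\Delta_H$ depending only on $\epsilon_P,\epsilon_{u,r}$ and $\Delta_{u,\xi}$ entering only through $\Delta_G$, bound $\norm{\Delta_H},\norm{\Delta_G}$ polynomially (the paper does this via the Woodbury identity under the threshold $\epsilon_P < \lambda_{\min}(Q+P)/2$), then rerun the contraction argument of Theorem~\ref{thm:opt-tracking} with the extra additive error and unroll. The only cosmetic difference is that you fold $\Delta_H$ into the contraction factor via $\norm{I+\eta\widehat{H}}$ (note $\widehat{H}$ need not be symmetric, so you should bound this by $\norm{I+\eta H}+\eta\norm{\Delta_H}$ rather than appeal to negative definiteness), whereas the paper keeps $H$ in the main term and treats $(1+\sqrt{2d_x/B})\norm{\Delta_H}$ as an additive correction to $\gamma$ --- the two are equivalent up to the triangle inequality.
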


%\vspace{-2mm}
\section{Experiments}\label{sec:experiment}
We now proceed to evaluate our algorithm numerically on LQR and unicycle systems. For all the experiments, we use the CleanRL \citep{huang2022cleanrl} implementation of Twin-Delayed Deep Deterministic Policy Gradient (TD3) \citep{fujimoto2018addressing} as our actor-critic algorithm. All code needed to reproduce the examples found in this section will be made available at the following repository: \url{https://github.com/unstable-zeros/layered-ac}.
%In what follows,  We then consider LQR with state constraints, which makes the map from the initial condition to the optimal dual variable nonlinear. We show that, by parameterizing the dual map as a neural network, our algorithm can learn good-performing policies that respect the state constraints. Finally, we apply our algorithm to the unicycle system to demonstrate its potential on learning to plan and control for nonlinear systems. 
\vspace{-2mm}
\subsection{Unconstrained LQR}
\paragraph{Experiment Setup} We begin by validating our algorithm on unconstrained LQR problems and show that our algorithm achieves near-optimal performance and near-perfect reference tracking. We consider linear systems \eqref{eq:linear-dynamics} with dimensions ${d_x=d_u=2, 4, 6, 8}$ and horizon $T=20$. For each system size, we randomly sample 15 pairs of dynamics matrices $(A, B)$\footnote{Each entry is sampled i.i.d from the standard normal distribution.} and normalize $A$ so that the system is marginally stable ($\rho(A)=1$). For all setups, we consider a quadratic cost \eqref{eq:quadratic-cost} with $Q = I_{d_x}, R = 0.01 I_{d_u}$. We have $\mathcal{X}=\R^{d_x}, \mathcal{U}=\R^{d_u}$, and the initial state $\xi \sim \N(0, I_{d_x})$. We leverage the linearity of the dynamics to parameterize the tracking controller $\pi^{\mathrm{track}}$ to be linear, and the value function $p^\pi$ to be quadratic in the augmented state \eqref{eq:augmented-state}. Since $p^\pi$ is quadratic, the optimization problem for the trajectory planner \eqref{eq:traj-gen-param} is a QP, which we solve with CVXPY \citep{diamond2016cvxpy}. We parameterize the dual network to be a linear map as in \eqref{eq:lqr-dual-pred}. We train the tracking policy and the dual network jointly for $100,000$ transitions ($5,000$ episodes) with dual batch size $B=5$, before freezing the tracking policy and just updating the dual network for another $5,000$ transitions (250 episodes). We specify the detailed training parameters in Table \ref{tab:lqr-hyperparam} in the Appendix. During training, we periodically evaluate the learned policy by applying it on $50$ initial conditions. We then record the cost it achieved and the average tracking deviation $\frac{1}{T}\sum_{t=1}^T|r_t - x_t|$. We report relative costs normalized by the optimal cost of solving \eqref{eq:redundant-prob} directly with the corresponding true dynamics and cost function. Thus, a relative cost of $1$ is optimal. The results are summarized below.

%%%%%%%%%%%%%%%% Learns near perfect policies
\begin{table}[h]
    \centering
    \begin{tabular}{|c|c|c|}
         \hline$d_x, d_u$ & Relative Cost ($\downarrow$) & Mean Tracking Deviation ($\downarrow$) \\ \hline
         2              & 1.004      & 0.002 \\\hline
         4              & 1.009      & 0.003 \\\hline
         6              & 1.020      & 0.008 \\\hline
         8              & 1.031      & 0.009 \\\hline
    \end{tabular}
    \caption{LQR Results on Varying System Sizes.}
    \label{tab:lqr-result}
\end{table}

\paragraph{Varying System Sizes} In Table \ref{tab:lqr-result}, we summarize the cost and mean tracking deviations evaluated at the end of training.\footnote{The reported numbers are their respective medians taken over $15$ random LQR instances.} We first note that the learned policy achieves near-optimal cost and near-perfect tracking for all the system sizes considered. Figure \ref{fig:clqr-traj} shows a representative sample trajectory that has a mean tracking deviation of $0.005$. This shows that our parameterization and learning algorithm are able to find good policies with only black-box access to the underlying dynamics.  We note that the performance degrades slightly as the size of the system grows. This is likely because learning the tracking controller becomes more difficult as the size of the state space increases. However, even for the largest system we considered ($d_x=8$), the cost of the learned controller is still only {$3\%$} above optimal.

%%%%%%%%%%%%%%%% Dual map converges
\begin{figure}
    \centering
    \includegraphics[width=0.8\textwidth, trim={0.5cm 0 0.7cm 1cm},clip]{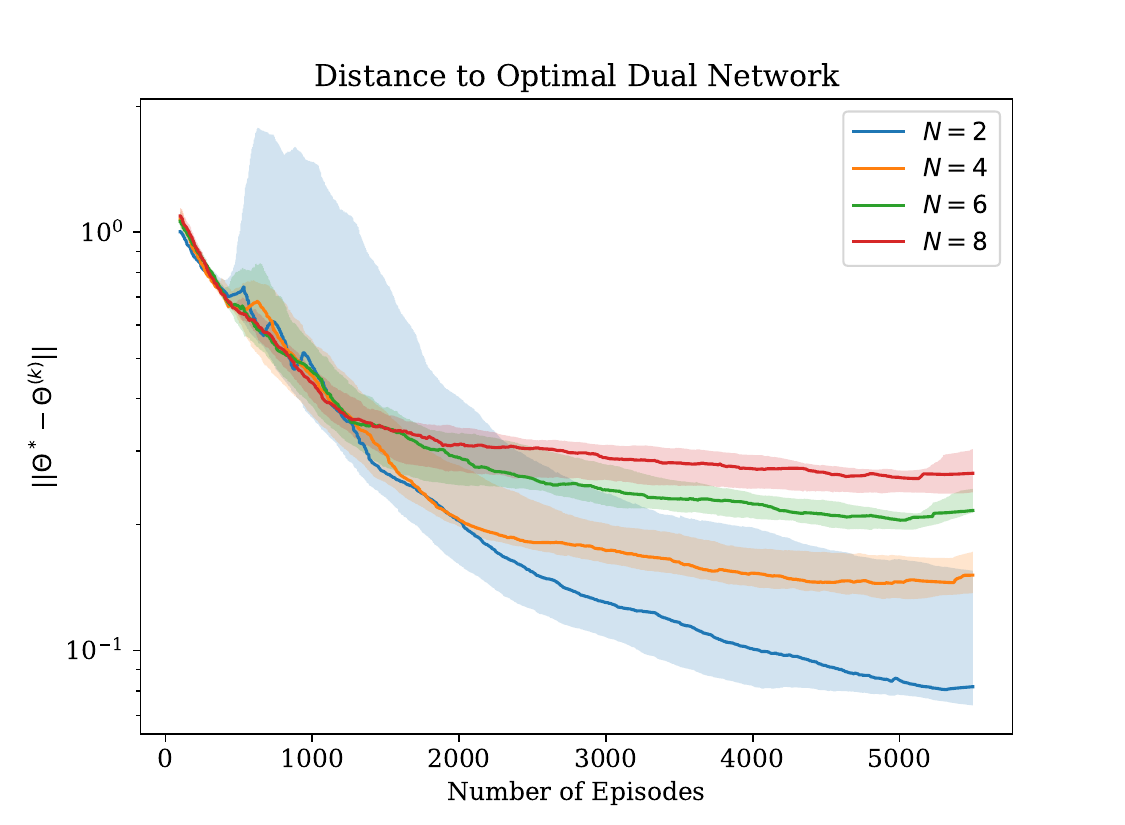}
    \caption{Training progress for the dual map parameter $\Theta$. Here, the solid lines are the median over $15$ random LQR instances, and the shaded regions represent the $25^{th}$ to $75^{th}$ percentile.}
    \label{fig:theta-distance}
\end{figure}

\paragraph{Visualization of Dual Learning} We visualize the algorithm's progress for learning the dual map in Figure \ref{fig:theta-distance}. Recall that our theory suggests that in the unconstrained LQR case, the dual map weight $\Theta$ will converge to the neighborhood of the optimal dual map $\Theta^*$, where the radius of the neighborhood depends on the quality of the learned controller. This is indeed the case shown in Figure \ref{fig:theta-distance}, where the norm of the difference $\Theta - \Theta^*$ first decays exponentially before reaching a plateau. We note that this plot also validates our choice to start learning the dual network before the tracking controller training has converged, as progress is made starting at the very beginning of the training.

%%%%%%%%%%%%%%%%%% Not learning dual
\begin{table}[h]
    \centering
    \begin{tabular}{|c|c|c|}
         \hline$d_x, d_u$ & Relative Cost ($\downarrow$) & Mean Tracking Deviation ($\downarrow$) \\ \hline
         2    & 1.012 ($+0.7\%$)     & 0.046 ($+2,300\%$) \\ \hline
         4    & 1.028 ($+1.8\%$)     & 0.045 ($+1,500\%$)\\ \hline
         6    & 1.036 ($+1.5\%$)     & 0.061 ($+763\%$)\\ \hline
         8    & 1.052 ($+2.0\%$)     & 0.062 ($+689\%$)\\ \hline
    \end{tabular}
    \caption{LQR Results without Dual Learning. Numbers in parentheses denote the percentage difference from the approach with dual learning.}
    \label{tab:lqr-result-no-dual}
    \vspace{-5mm}
\end{table}
\paragraph{Comparison to heuristic approach} We now compare our approach to the heuristic approach of generating trajectories without using the learned dual variable \citep{srikanthan2023data, zhang2024change}, summarized in equation \eqref{eq:heuristic-upper-level-opt}. We use the same parameters to train a tracking controller and a value function, with the only difference being that $\pi^\mathrm{plan}$ solves \eqref{eq:heuristic-upper-level-opt} instead of \eqref{eq:traj-gen-param}. We show the results in Table \ref{tab:lqr-result-no-dual}.
%To make the comparison easier, we report the percentage change from the approach with dual learning.
First, the heuristic policy is outperformed by our approach both in terms of cost and tracking deviation across all the different system sizes, showing the value of learning to predict the dual variable. We note that the difference is especially pronounced for tracking deviation. Since the dual network learned to preemptively perturb the reference to minimize tracking error, it achieves near-perfect tracking and an order of magnitude lower tracking error. This suggests that learning the dual network is especially important in achieving good coordination between the trajectory planner and the tracking controller.

%%%%%%%%%%%%%%%% Rho experiment
\begin{table}[h]
    \centering
    \begin{tabular}{|c|c|c|c|c|c|} \hline
         $\rho$            &  0.5 & 1 & 2 & 4 & 8    \\ \hline
         Relative Cost ($\downarrow$)  & 2.04 & 1.24 & 1.11& 1.10 & 1.19\\ \hline
         Mean Deviation ($\downarrow$)  & 0.039 & 0.01 & 0.005 & 0.003 & 0.003\\ \hline
    \end{tabular}
    \caption{LQR Results on Varying Hyperparameter $\rho$}
    \label{tab:rho-result}
\end{table}

\paragraph{The role of $\rho$} Finally, we note that the penalty parameter $\rho$ is a hyperparameter that needs to be tuned when implementing Algorithm \ref{alg:bi-actor-critic}. Since $\rho$ directly affects the objective of the tracking problem, it begs the question of whether the choice of $\rho$ significantly affects the performance of our algorithm. We test this hypothesis on 15 randomly sampled \textit{underactuated} systems where $d_x=4$ and $d_u=2$. We use the same set of hyperparameters as above except for $\rho$. We report the results in Table \ref{tab:rho-result}. From Table \ref{tab:rho-result}, we see that algorithm behavior is robust to the choice of $\rho$, so long as it is large enough; indeed, only the case of $\rho=0.5$ leads to significant performance degradation. %This suggests that the choice of $\rho$, and the resulting numerical conditions, might affect the difficulty of learning the tracking controller. However, the algorithm can learn a close-to-optimal policy as long as $\rho$ is within an order-of-magnitude that can balance the two terms in the tracking cost \eqref{eq:lower-level-opt}.

%%%%%%%%%%%%%%%%%% state constraint
\subsection{LQR with State Constraints}
In the unconstrained case, the map from the initial condition $\xi$ to the optimal dual variable $\nu^*$ is linear. In this section, we consider the case where inequality constraints are introduced and this map is no longer linear. We show that by parameterizing the dual map $v_\theta(\xi)$ as a neural network, we can learn well-performing policies that respect the constraints. Similar to the experiments above, we randomly sample $10$ LQR systems where $d_x = d_u = 2$. Here we consider stable systems with $\rho(A)=0.995$. The time horizon is fixed to $T=20$ and cost matrices are $Q = I, R = 0.01 I$. We add the constraint that
\begin{equation*}
    \mathcal{X} = \{x_{0:T}\;|\;x_{t,i} \geq -0.05,\quad 1 \leq t \leq 20, i = 1, 2\},
\end{equation*}
i.e., that we restrict all states except for the initial state to be above $-0.05$. Since the additional constraint does not affect the tracking problem, we still parametrize the actor and critic as linear and quadratic, respectively. Since the optimal dual map is no longer linear, we parameterize the dual map as a neural network with a single hidden layer with ReLU activation. Note that the optimization problem for trajectory planning \eqref{eq:traj-gen-param} is still a QP as it does not depend on the form of the dual network. To account for the nonlinearity of the dual network, we increase the dual batch size to $40$ trajectories, and train the policy and dual network for $150,000$ transitions, before freezing the tracking controller and training the dual network for another $600,000$ transitions ($30,000$ episodes). We specify the detailed training parameters in Table \ref{tab:clqr-hyperparam}. We report the relative cost and mean constraint violation\footnote{We measure the constraint violation as $max(0.05 - x^{(i)}_t, 0), t = 1, ..., 20$. Reported values are the medians over the $10$ systems.} in Table \ref{tab:clqr-result} and show a representative sample trajectory in Figure \ref{fig:clqr-traj}.

\begin{table}[]
    \centering
    \begin{tabular}{|c|c|c|}
        \hline Method & Relative Cost ($\downarrow$) & Mean Constraint Violation ($\downarrow$) \\ \hline
         Ours         & 1.011      & 0.0002 \\ \hline
         No Dual \eqref{eq:heuristic-upper-level-opt}    & 1.014      & 0.002 \\ \hline
    \end{tabular}
    \caption{Constrained LQR Results}
    \label{tab:clqr-result}
\end{table}

\begin{figure}[]
    \centering
    \includegraphics[width=0.8\textwidth, trim={0.7cm 0 2cm 1.7cm},clip]{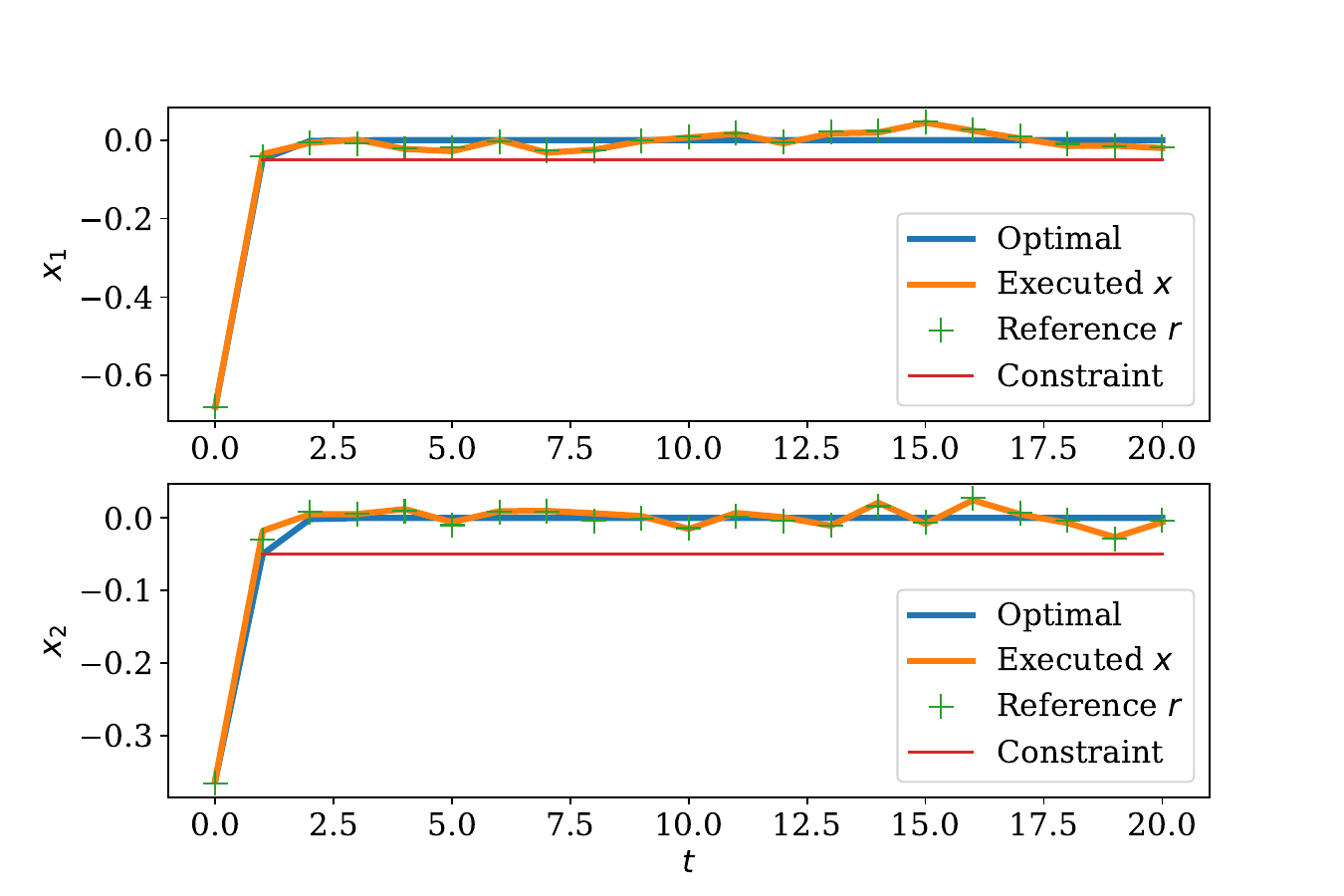}
    \caption{A Representative Sample Trajectory for Constrained LQR.}
    \label{fig:clqr-traj}
\end{figure}

As seen in Table \ref{tab:clqr-result} and the sample trajectories Figure \ref{fig:clqr-traj}, we can learn to generate reference trajectories satisfying the constraints. The planned trajectory is well-adapted to the learned tracking controller so that the executed trajectory also avoids constraint violations. This shows empirically that our algorithm can effectively learn to predict the dual variable even when the desired dual map is nonlinear. We again compare the results with solving for the reference without learning a dual network \eqref{eq:heuristic-upper-level-opt}, and observe that learning the dual network results in better coordination between the planner and the tracking controller. As a result, the approach with dual learning achieves better constraint satisfaction rates. We conclude this subsection by noting that in practice, one can tighten the constraints $x \in \mathcal{X}$ to ensure constraint satisfaction, even when there is tracking error. How to leverage the learned dual network to inform constraint tightening is an interesting direction of future work.

\vspace{-2mm}
\subsection{Unicycle}
Finally, we apply our algorithm to controlling a nonlinear unicycle system with state and control input
\begin{equation*}
    x_t = \begin{bmatrix} p_{x,t} \\ p_{y,t} \\ \theta_t \\ v_t  \end{bmatrix} \in \R^4,\quad u_t = \begin{bmatrix} a_t \\ \omega_t \end{bmatrix} \in \R^2,
\end{equation*}
where $p_x, p_y$ are the $x$ and $y$ positions, $\theta$ the heading angle, and $v$ the velocity of the unicycle. The two control inputs are the acceleration $a$ and the angular velocity (steering) $\omega$. We consider the discrete-time nonlinear dynamics given by
\begin{equation*}
    x_{t+1} = f(x_t, u_t) = \begin{bmatrix} p_{x,t} \\ p_{y,t}  \\ \theta_t \\ v_t \end{bmatrix} + 0.1
    \begin{bmatrix}
    \cos(\theta_t) v_t \\
    \sin(\theta_t) v_t \\
    \omega_t \\
    a_t  \end{bmatrix}.
\end{equation*}
We consider the problem of steering the vehicle to the origin, specified by the quadratic objective \eqref{eq:quadratic-cost} with $Q = \mathrm{diag}([1, 1, 0, 0])$, and $R = 0.01I_2$. The initial condition $\xi$ is sampled uniformly on the unit circle. We take $T=20$. The trajectory planner $\pi^\mathrm{plan}$ learns to generate references only for the positions $(p_x, p_y)$ instead of the full state.

The nonlinearity of the dynamics presents several challenges. First, we can no longer assume the form of the optimal tracking controller and its value function and have to parameterize both as neural networks. As a result of this non-convex parameterization of $p^\pi$, the reference generation problem \eqref{eq:traj-gen-param} becomes nonconvex. We use gradient descent to find reference trajectories that are locally optimal for the trajectory planning problem. Secondly, the nonlinear nature of the dynamics makes the learning of a tracking controller more difficult. To address this, we warmstart the tracking controller by training on simple line trajectories before running Algorithm \ref{alg:bi-actor-critic} in full with reference trajectory generated by solving \eqref{eq:traj-gen-param}.%\footnote{The simple reference trajectories are equally spaced line segments that connect initial states to the origin.}
This overcomes the difficulty that \eqref{eq:traj-gen-param} tends to generate bad trajectories when $p^\pi$ is randomly initialized. We train the tracking controller on simple references for $100,000$ transitions ($5,000$ episodes) as a warmstart, and then run Algorithm \ref{alg:bi-actor-critic} for $500,000$ transitions ($25,000$ episodes). We run the experiment both with and without training the dual network and report our results in Table \ref{tab:unicycle-result}. To make the result interpretable, we normalize the cost against iLQR as a baseline.\footnote{For each initial condition, we run iLQR with two random dynamically feasible initial trajectories. We take the lesser cost as iLQR's cost.}

First, we see that our learned policy achieves performance comparable to that of iLQR---we however emphasize that our policy is trained without explicit knowledge of the dynamics of the system. We note that the costs achieved by the policy learned with and without a dual network are similar. This could be due to the the trajectory generation problem \eqref{eq:traj-gen-param} not being solved exactly. However, learning with a dual network again leads to significantly better tracking performance, highlighting the importance of dual networks in coordinating the planning and tracking layers.

\begin{table}[h]
    \centering
    \begin{tabular}{|c|c|c|}
         \hline Method & Relative Cost ($\downarrow$) & Mean Tracking Deviation ($\downarrow$) \\ \hline
         % iLQR & 3.46 & -\\
         % Ours & 3.61 & 0.02 \\
         % No Dual & 3.61 & 0.05 \\
         iLQR & 1 & -\\ \hline
         Ours & 1.04 & 0.02 \\ \hline
         No Dual & 1.04 & 0.05 \\ \hline
    \end{tabular}
    \caption{Unicycle Results}
    \label{tab:unicycle-result}
    \vspace{-5mm}
\end{table}
\section{Conclusion}\label{sec:discussion}
We proposed a principled way of parameterizing and learning a layered control policy composed of a trajectory planner and a tracking controller. We derived our parameterization from an optimal control problem and showed that a dual network emerges naturally to coordinate the two components. We showed that our algorithm can learn to predict the optimal dual variable for unconstrained LQR problems and validated this theory via simulation experiments. Further simulation experiments also demonstrated the potential of applying this method to nonlinear control problems. Future work will explore using the dual network to inform constraint tightening and parameterizing the planner \eqref{eq:traj-gen-param} directly as a neural network to reduce online computation.

%We show that learning the dual network leads to better agreement between the trajectory planner and the tracking controller.   There are many interesting directions for futher work. First, one drawback of the current setup is that one needs to learn a tracking controller that is quite long-horizon, which makes the reinforcement learning problem more difficult. In the future, we plan on taking a more MPC-like approach to the upper level problem where we do not plan for the entire horizon of the problem, but instead for certain planning horizon. The lower level also only need to track this horizon, so the learning will become easier. One can also imagine that we train the upper layer using an actor-critic algorithm. We will explore the connection between that and classical policy-based methods methods.

%%%%%%%%%%%%%%%%%%%%%%%%%%%%%%%%%%%%%%%%%%%%%%%%%%%%%%%%%%%%%%%%%%%%%%%%%%%%%%%%
\section*{ACKNOWLEDGMENTS}
This work is supported in part by NSF award ECCS-2231349, SLES-2331880, and CAREER-2045834.
F. Yang would like to thank Bruce Lee, Thomas Zhang, and Kaiwen Wu for their helpful discussions.
%%%%%%%%%%%%%%%%%%%%%%%%%%%%%%%%%%%%%%%%%%%%%%%%%%%%%%%%%%%%%%%%%%%%%%%%%%%%%%%%

\bibliographystyle{unsrtnat}
\bibliography{bibFiles/layered}

\newpage
\appendix
\section{Experiment Setup} \label{sec:appendix-DDPG}
\subsection{Hyperparameters for the Unconstrained LQR Experiments}
See Table \ref{tab:lqr-hyperparam}.
\begin{table}[h]
    \centering
    \begin{tabular}{|c|c|}
        \hline \textbf{Parameter} & \textbf{Value} \\ \hline
        TD3 Policy Noise        & 5e-4 \\
        TD3 Noise Clip          & 1e-3 \\
        TD3 Exploration Noise   & 0  \\
        actor learning rate     & 3e-3 \\
        actor batch size        & 256 \\
        critic learning rate    & 3e-3 \\
        critic batch size       & 256 \\
        dual learning rate      & 0.1 \\
        dual batch size         & 5 \\\hline
    \end{tabular}
    \caption{Hyperparameters for the Unconstrained LQR Experiments}
    \label{tab:lqr-hyperparam}
\end{table}

\subsection{Hyperparameters for the Constrained LQR Experiments}
See Table \ref{tab:clqr-hyperparam}.
\begin{table}[h]
    \centering
    \begin{tabular}{|c|c|}
        \hline \textbf{Parameter} & \textbf{Value} \\ \hline
        TD3 Policy Noise        & 5e-4 \\
        TD3 Noise Clip          & 1e-3 \\
        TD3 Exploration Noise   & 0  \\
        actor learning rate     & 3e-3 \\
        actor batch size        & 256 \\
        critic learning rate    & 3e-3 \\
        critic batch size       & 256 \\
        dual learning rate      & 3e-4 \\
        dual batch size         & 40 \\\hline
    \end{tabular}
    \caption{Hyperparameters for the Constrained LQR Experiments}
    \label{tab:clqr-hyperparam}
\end{table}
The dual network is chosen to be an MLP with one hidden layer of 128 neurons. The activation is chosen to be ReLU.

\subsection{Hyperparameters for the Unicycle Experiments}
See Table \ref{tab:unicycle-hyperparam}.
\begin{table}[h]
    \centering
    \begin{tabular}{|c|c|}
        \hline \textbf{Parameter} & \textbf{Value} \\ \hline
        TD3 Policy Noise        & 1e-3 \\
        TD3 Noise Clip          & 1e-2 \\
        TD3 Exploration Noise   & 6e-2  \\
        actor learning rate     & 1e-3 \\
        actor batch size        & 256 \\
        critic learning rate    & 1e-3 \\
        critic batch size       & 256 \\
        dual learning rate      & 5e-3 \\
        dual batch size         & 60 \\\hline
    \end{tabular}
    \caption{Hyperparameters for the Unicycle Experiments}
    \label{tab:unicycle-hyperparam}
\end{table}
The dual network is chosen to be an MLP with one hidden layer of 128 neurons. The actor and critic are both MLPs with a single hidden layer of 256 neurons. All activation functions are ReLU.
\section{Proofs for Theorem \ref{thm:opt-tracking}}\label{sec:appendix-proof-opt}
The LQR problem considered in Section \ref{sec:analysis} with costs \eqref{eq:quadratic-cost} and dynamics \eqref{eq:linear-dynamics} admits closed-form solutions to the $r$-update \eqref{eq:opt-traj-opt-update} and $x$-update \eqref{eq:opt-tracking-update}. In this section, we begin by showing that for this problem, the dual variable can indeed be written as a linear map of the initial condition $\xi$. We then derive the closed-form solutions to the updates \eqref{eq:opt-traj-opt-update} and \eqref{eq:opt-tracking-update}. Finally, we use a contraction argument to show our desired result in Theorem \ref{thm:opt-tracking}. In the process, we make clear the conditions on step size $\eta$ and batch size $B$ to guarantee the contraction.

For easing notation, for the rest of this section, we again define $\tilde{r} := r + \nu$. We also define the matrices
\begin{equation*}
    E := \begin{bmatrix}
    0 & 0  & &\hdots& 0\\
    B & 0  & &\hdots & 0\\
    AB & B & &\hdots & 0\\
    %A^2B & AB & B & &\hdots & 0\\
    \vdots &\vdots & & & \vdots\\
    A^{T-1}B & A^{T-2}B  & &\hdots & B
    \end{bmatrix},\qquad
    F := \begin{bmatrix}
        I \\ A \\ A^2 \\ \vdots \\ A^T
    \end{bmatrix}.
\end{equation*}

%%%%%%%%%%%%%%%%%%%%%%%%%%%%%%%%%%%%% Form of Dual Map %%%%%%%%%%%%%%%%%%%%%%%%%%%%%%%%%%%%%%%%%%%%%
\begin{lemma}\label{lem:dual-existence}
For the problem considered in section \ref{eq:redundant-prob}, given the initial condition $\xi$, the optimal dual variable can be expressed as a unique linear map from $\xi$ as
\begin{equation*}
    \nu^* = -\frac{2}{\rho} Q \left(-E(E^\top Q E + R)^{-1} E^\top Q F + F \right)\xi.
\end{equation*}
\end{lemma}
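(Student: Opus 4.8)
The plan is to reduce problem~\eqref{eq:redundant-prob}, specialized to the unconstrained LQR setting of Section~\ref{sec:analysis}, to a finite-dimensional strongly convex quadratic program, write out its KKT system, and read off the multiplier attached to the constraint $r=x$ in closed form. First I would stack the dynamics~\eqref{eq:linear-dynamics}: iterating $x_{t+1}=Ax_t+Bu_t$ from $x_0=\xi$ yields $x = F\xi + E u$ with $E,F$ the matrices defined above (the standard free-response/forced-response split; the first block row of $E$ vanishes because $x_0$ is independent of $u$). Using the block-diagonal cost matrices $\mathcal{Q},\mathcal{R}$ built from $Q,R$ in~\eqref{eq:quadratic-cost}, and recalling $\mathcal{X}=\R^{d_x}$, $\mathcal{U}=\R^{d_u}$, the problem reads
\begin{equation*}
  \min_{r,x,u}\ r^\top\mathcal{Q}r + u^\top\mathcal{R}u \quad\text{s.t.}\quad x = F\xi + E u,\quad r = x .
\end{equation*}

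Next I would attach a multiplier $\mu$ to $x=F\xi+Eu$ and a multiplier $\lambda$ to $r=x$, form $L = r^\top\mathcal{Q}r + u^\top\mathcal{R}u + \mu^\top(x-F\xi-Eu) + \lambda^\top(r-x)$, and impose stationarity: $\nabla_r L = 0$ gives $\lambda = -2\mathcal{Q}r$, $\nabla_x L = 0$ gives $\mu = \lambda$, and $\nabla_u L = 0$ gives $2\mathcal{R}u = E^\top\lambda$. Substituting $r = x = F\xi + Eu$ and $\lambda = -2\mathcal{Q}x$ into the last equation gives $(E^\top\mathcal{Q}E + \mathcal{R})u = -E^\top\mathcal{Q}F\xi$; since $R\succ0$ implies $E^\top\mathcal{Q}E+\mathcal{R}\succ0$, this has the unique solution $u^\star = -(E^\top\mathcal{Q}E+\mathcal{R})^{-1}E^\top\mathcal{Q}F\,\xi$, so that $r^\star = x^\star = \bigl(F - E(E^\top\mathcal{Q}E+\mathcal{R})^{-1}E^\top\mathcal{Q}F\bigr)\xi$ and $\lambda^\star = -2\mathcal{Q}x^\star = -2\mathcal{Q}\bigl(-E(E^\top\mathcal{Q}E+\mathcal{R})^{-1}E^\top\mathcal{Q}F + F\bigr)\xi$.

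It then remains to convert the ordinary multiplier $\lambda^\star$ into the \emph{scaled} dual variable $\nu$ of the augmented Lagrangian: expanding $\tfrac{\rho}{2}\lVert r+\nu-x\rVert_2^2-\tfrac{\rho}{2}\lVert\nu\rVert_2^2$ produces a linear term $\rho\,\nu^\top(r-x)$, so the unscaled multiplier of $r-x=0$ is $\rho\nu$ (consistent with the ascent step~\eqref{eq:update-2}), whence $\nu^\star = \lambda^\star/\rho$, which is exactly the claimed formula. For existence and uniqueness I would note that the reduced problem in $u$ is unconstrained and strongly convex (Hessian $2(E^\top\mathcal{Q}E+\mathcal{R})\succ0$), so $u^\star$, and hence $r^\star=x^\star$, is the unique primal optimum; strong duality holds since this is a convex QP with only affine equality constraints; and, KKT being necessary here, any dual-optimal pair $(\mu^\star,\lambda^\star)$ must satisfy $\lambda^\star=-2\mathcal{Q}r^\star$ and $\mu^\star=\lambda^\star$, so the optimal dual is unique and linear in $\xi$. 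I do not expect a real obstacle: the computation is elementary linear algebra, and the only points needing a little care are the $\rho$-scaling bookkeeping between $\nu$ and $\lambda$, and observing that possible singularity of $Q\succeq0$ is harmless since the only matrix inverted is the positive-definite $E^\top\mathcal{Q}E+\mathcal{R}$.
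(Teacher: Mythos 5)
Your proof is correct and takes essentially the same route as the paper's: both extract the multiplier of the constraint $r=x$ from KKT stationarity, invoke primal feasibility $r^*=x^*$, and substitute the closed-form LQR solution $x^*=\bigl(F-E(E^\top\mathcal{Q}E+\mathcal{R})^{-1}E^\top\mathcal{Q}F\bigr)\xi$. If anything you are more complete than the paper, which works directly with the stationarity of the augmented Lagrangian in $r$ and cites the closed form of $x^*$ and the uniqueness without derivation, whereas you derive the full KKT system of the stacked QP, make the $\rho$-scaling between the ordinary multiplier and the scaled dual variable explicit, and justify uniqueness via strong convexity in $u$.
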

\begin{proof}
    From the KKT condition for the optimization problem \eqref{eq:redundant-prob}, we have that
    \begin{equation*}
        \nabla_r\left(r^\top Q r + \frac{\rho}{2}\norm{r + \nu^* - x^*}_2^2\right)\Big\vert_{r=r^*} = 0.
    \end{equation*}
    Solving for $r^*$, we get that
    \begin{equation*}
        2(Q + \frac{\rho}{2}I)r^* = -\rho \nu^* + \rho x^*.
    \end{equation*}
    Also from the KKT condition, we have that $r^*=x^*$. Subbing this into the above expression and rearranging terms, we get that
    \begin{equation*}
        \nu^* = -\frac{2}{\rho} Q r^* = -\frac{2}{\rho} Q x^*.
    \end{equation*}
    Finally, from the equivalence of the original problem \eqref{eq:original-prob} and the redundant problem \eqref{eq:redundant-prob}, we see that $x^*$ can be expressed in closed form as
    \begin{equation*}
        x^* = \left(-E(E^\top Q E + R)^{-1} E^\top Q F + F \right)\xi.
    \end{equation*}
\end{proof}

%%%%%%%%%%%%%%%%%%%%%%%%%%%%%%%%%%%%% Form of r-x %%%%%%%%%%%%%%%%%%%%%%%%%%%%%%%%%%%%%%%%%%%%%
We now derive the closed-form update rules and show the following.
\begin{lemma}\label{lem:rx-expression}
    In the LQR setting, we have that the difference between the updates $r^+$ and $x^+$ can be written as a linear map of the initial condition $\xi$ as
\begin{equation*}
    r^+ - x^+ = H \Theta^{(k)} \xi + G \xi,
\end{equation*}
where
\begin{equation*}
\begin{aligned}
    H &:= -\frac{2}{\rho} P(Q + P)^{-1}P - \frac{\rho}{2} E\bar{R}^{-1} E^\top,\\
    G &:= HF + F.
\end{aligned}
\end{equation*}
$H$ is symmetric negative definite.
\end{lemma}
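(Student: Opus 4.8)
The plan is to push both the tracking update \eqref{eq:opt-tracking-update} and the planning update \eqref{eq:opt-traj-opt-update} into closed form using the lifted (batch) form of the linear dynamics $x = Eu + F\xi$, then subtract the two and collect the resulting linear map in $\xi$ and in the dual prediction $\hat v := v_{\theta^{(k)}}(\xi) = \Theta^{(k)}\xi$. First I would solve the inner (tracking) problem: substituting $x = Eu + F\xi$ into \eqref{eq:opt-tracking-update} reduces it to the unconstrained, strongly convex quadratic program $\min_u\, u^\top\mathcal{R}u + \frac{\rho}{2}\|\tilde r - Eu - F\xi\|_2^2$ in $u$ alone, where $\tilde r := r + \hat v$. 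Its solution is $u^+ = \frac{\rho}{2}\bar R^{-1}E^\top(\tilde r - F\xi)$ with $\bar R := \mathcal{R} + \frac{\rho}{2}E^\top E \succ 0$, hence $x^+ = Eu^+ + F\xi = M(\tilde r - F\xi) + F\xi$ with $M := \frac{\rho}{2}E\bar R^{-1}E^\top \succeq 0$. Substituting $u^+$ back and using $\tilde r - x^+ = (I - M)(\tilde r - F\xi)$ gives the optimal tracking value $p^*(\tilde r;\xi) = \frac{\rho}{2}(\tilde r - F\xi)^\top(I - M)(\tilde r - F\xi) =: (\tilde r - F\xi)^\top P(\tilde r - F\xi)$, i.e. $P = \frac{\rho}{2}(I - M)$, which is the matrix $P$ appearing in the statement.

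Next I would solve the outer (planning) problem. With this quadratic form for $p^*$, problem \eqref{eq:opt-traj-opt-update} becomes $\min_r\, r^\top\mathcal{Q}r + (r + \hat v - F\xi)^\top P(r + \hat v - F\xi)$, again strongly convex in $r$; its optimality condition $\mathcal{Q}r + P(r + \hat v - F\xi) = 0$ gives $r^+ = (\mathcal{Q}+P)^{-1}P(F\xi - \hat v)$, and therefore, using $(\mathcal{Q}+P)^{-1}P - I = -(\mathcal{Q}+P)^{-1}\mathcal{Q}$, one gets $\tilde r - F\xi = (\mathcal{Q}+P)^{-1}\mathcal{Q}(\hat v - F\xi)$. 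Substituting $\hat v = \Theta^{(k)}\xi$ makes both $r^+$ and $x^+$ explicit linear maps of $\xi$.

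Subtracting, $r^+ - x^+ = (\tilde r - \hat v) - ( M(\tilde r - F\xi) + F\xi ) = (I - M)(\tilde r - F\xi) - \hat v$. Plugging in the expression for $\tilde r - F\xi$, using $I - M = \frac{2}{\rho}P$, and simplifying the coefficient of $\hat v$ via the push-through identity $\frac{2}{\rho}P(\mathcal{Q}+P)^{-1}\mathcal{Q} = \frac{2}{\rho}P - \frac{2}{\rho}P(\mathcal{Q}+P)^{-1}P = (I - M) - \frac{2}{\rho}P(\mathcal{Q}+P)^{-1}P$ yields exactly $H = -\frac{2}{\rho}P(\mathcal{Q}+P)^{-1}P - M$ as the coefficient of $\hat v = \Theta^{(k)}\xi$; the remaining, $\Theta$-independent terms then assemble (using $I + H = \frac{2}{\rho}P(\mathcal{Q}+P)^{-1}\mathcal{Q}$, modulo a short sign-tracking step) into $G\xi$ with $G = HF + F$. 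For the definiteness claim: $\mathcal{Q}, M, P$ are symmetric, so $H$ is symmetric; $M \succeq 0$ since $\bar R \succ 0$; and the variational identity $\frac{\rho}{2}y^\top(I - M)y = \min_u\, u^\top\mathcal{R}u + \frac{\rho}{2}\|y - Eu\|_2^2$ together with $\mathcal{R}\succ 0$ forces this quantity to be strictly positive for $y \neq 0$, so $I - M \succ 0$, hence $P \succ 0$ and (since $\mathcal{Q}\succeq 0$) $\mathcal{Q} + P \succ 0$; thus $P(\mathcal{Q}+P)^{-1}P \succ 0$ and $H = -\frac{2}{\rho}P(\mathcal{Q}+P)^{-1}P - M \prec 0$.

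The main obstacle is essentially bookkeeping: carrying the lifted-dynamics algebra and the two matrix-inversion identities carefully enough that the $\hat v$-coefficient and the $\xi$-coefficient land exactly on the stated $H$ and $G$ (including getting the sign of $G$ right). The one step that is not purely mechanical is arguing $P \succ 0$ strictly rather than merely $P \succeq 0$, since strictness is what guarantees $\mathcal{Q} + P$ is invertible even in the degenerate case $Q = 0$; the variational characterization of $I - M$ above supplies this. A viable alternative to the lifted form is to propagate the tracking value function backward through its discrete Riccati recursion and then solve the planning QP, but the batch form produces the explicit $H$ and $G$ most directly.
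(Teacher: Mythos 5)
Your proposal follows essentially the same route as the paper's own proof: lift the dynamics to the batch form $x = Eu + F\xi$, solve the tracking QP in closed form to get $u^+$, $x^+$, and the quadratic value $p^*(\tilde r;\xi) = (\tilde r - F\xi)^\top P(\tilde r - F\xi)$ with $P = \frac{\rho}{2}I - \frac{\rho^2}{4}E\bar R^{-1}E^\top$, solve the planning QP for $r^+ = (Q+P)^{-1}P(F\xi - \hat v)$, and subtract; your variational argument for strict positivity of $P$ is a slightly more explicit version of the paper's ``partial minimization preserves convexity'' remark and is the better justification, since it delivers $P \succ 0$ rather than merely $P \succeq 0$. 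One caveat on your ``short sign-tracking step'': the algebra (both yours and the paper's own displayed computation) actually yields $r^+ - x^+ = H\Theta^{(k)}\xi - (HF+F)\xi$, so the lemma as stated carries a sign typo on the $G$ term; this is immaterial downstream, since only $H$ enters the contraction factor and $G$ is used only through the fixed-point identity, but your derivation cannot be made to land on $+G\xi$ with $G = HF + F$ as you claim.
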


\begin{proof}
We start by deriving the closed-form expressions for solving both the updates \eqref{eq:opt-traj-opt-update} and \eqref{eq:opt-tracking-update}. We begin by writing out the update rule more explicitly. First, we note that we can write all $x, u$ satisfying the dynamics constraint as
\[x = Eu + F\xi,\quad u_t \in \R^{d_u}.\]
Define
\begin{align*}
    p(\tilde{r}, \xi, u) &= u^\top R u + \frac{\rho}{2} \norm{\tilde{r} - x}_2^2\\
    &= u^\top R u + \frac{\rho}{2} \norm{\tilde{r} - (Eu + F\xi))}_2^2.
\end{align*}
We can solve for the optimal control action in closed form as
\begin{align*}
    u^* &= \text{argmin}_u p(\tilde{r}, \xi, u)\\
    &= -\frac{\rho}{2} \bar{R}^{-1} E^\top (F\xi - \tilde{r}),
\end{align*}
where we defined
\begin{equation*}
    \bar{R} = R + \frac{\rho}{2} E^\top E.
\end{equation*}
Subbing this back, we get that
\begin{align*}
    p^*(\tilde{r}, \xi) &= p(\tilde{r}, \xi, u^*)\\
    &=(F\xi - \tilde{r})^\top (\frac{\rho}{2}I - \frac{\rho^2}{4}E\bar{R}^{-1}E^\top)(F\xi - \tilde{r})\\
    &=:(F\xi - \tilde{r})^\top P (F\xi - \tilde{r}),
\end{align*}
where we defined
\begin{equation*}
    P := \frac{\rho}{2}I - \frac{\rho^2}{4}E\bar{R}^{-1}E^\top.
\end{equation*}
Note that we arrived at $p^*$ from a partial minimization on $u$, which preserves the convexity of the problem on $r$. Thus, we have that
\begin{equation*}
    P \succ 0.
\end{equation*}
With the knowledge of $p^*$, we can now solve for $r$ in closed-form.
Since $Q + P \succ 0$, we have that
\begin{align*}
    r^+ &= \text{argmin}_r r^\top Q r + p^*(r + \nu, \xi)\\
    &= \text{argmin}_r r^\top Q r + (F\xi - r - \nu)^\top P (F\xi - r - \nu) \\
    &= (Q + P)^{-1} P (F\xi - \nu).
\end{align*}
Thus, overall, we update rules are given as
\begin{align*}
    r^{+} &= (Q + P)^{-1} P (F\xi - \nu),\\
    u^{+} &= -\frac{\rho}{2} \bar{R}^{-1} E^\top (F\xi - r^{(k+1)} - \vpred^{(k)}),\\
    x^{+} &= E u^{(k+1)} + F\xi.
\end{align*}

From the closed-form update rules specified above, we have that
\begin{align*}
    &r^+ - x^+ \\
    =& r^+ - (E(-\frac{\rho}{2} \bar{R}^{-1} E^\top (F\xi - r^{(k+1)} - \vpred^{(k)})) + F\xi)\\
    =& (I - \frac{\rho}{2} E\bar{R}^{-1} E^\top) r^+ + \frac{\rho}{2} E\bar{R}^{-1} E^\top (F\xi - \nu) - F\xi\\
    =& -(\frac{2}{\rho} P(Q + P)^{-1}P + \frac{\rho}{2} E\bar{R}^{-1} E^\top) \nu - (HF + F)\xi
\end{align*}
Denote
\begin{equation*}
\begin{aligned}
    H &:= -\frac{2}{\rho} P(Q + P)^{-1}P - \frac{\rho}{2} E\bar{R}^{-1} E^\top\\
    G &:= HF + F,
\end{aligned}
\end{equation*}
we get the expression that we desire.
From the fact that $P \succ 0$, $Q \succeq 0$, and $\bar{R} \succ 0$, it follows that $H \prec 0$.
\end{proof}

Now, recall that for any $\xi \in \R^{d_x}$, the optimal dual map $\Theta^*$ satisfies that $\nu^*=\Theta^*\xi$, where $\nu^*$ is the optimal dual variable for the given $\xi$. From the KKT condition of \eqref{eq:redundant-prob}, we know that $\nu^*$ induces a fixed point to the update rules \eqref{eq:opt-traj-opt-update} and \eqref{eq:opt-tracking-update}. Thus,
\begin{equation*}
    r^+ - x^+ = r^* - x^* = H \Theta^* \xi + G \xi = 0.
\end{equation*}
Since this holds for all $\xi \in \R^{d_x}$, we have that
\begin{equation*}
     H \Theta^* + G = 0.
\end{equation*}

%%%%%%%%%%%%%%%%%%%%%%%%%%%%%%%%%%%%% Expectation %%%%%%%%%%%%%%%%%%%%%%%%%%%%%%%%%%%%%%%%%%%%%
Before starting with the main proof, we present the following lemma.
\begin{lemma}\label{lem:expectation}
For a set of i.i.d normal vectors $\xi_i \sim \mathcal{N}(0, I)$, we have that
\begin{equation*}
    \mathbb{E}\norm{\sum_{i=1}^{B}\frac{1}{B}\xi_i\xi_i^\top - I}_F \leq \sqrt{\frac{2d_x}{B}}.
\end{equation*}
\end{lemma}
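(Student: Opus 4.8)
The plan is to reduce the claim to a single second-moment computation and a variance-averaging argument. First I would apply Jensen's inequality: writing $M := \frac{1}{B}\sum_{i=1}^B\bigl(\xi_i\xi_i^\top - I\bigr)$ and using concavity of $\sqrt{\cdot}$, we get $\mathbb{E}\norm{M}_F \le \sqrt{\mathbb{E}\norm{M}_F^2}$, so it suffices to bound $\mathbb{E}\norm{M}_F^2$.

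Next, since $\norm{\cdot}_F^2$ comes from the trace inner product and the summands $Y_i := \xi_i\xi_i^\top - I$ are i.i.d.\ with $\mathbb{E}[Y_i] = 0$ (because $\mathbb{E}[\xi_i\xi_i^\top] = I$), I would expand $\mathbb{E}\norm{M}_F^2 = \frac{1}{B^2}\sum_{i,j}\mathbb{E}\,\mathrm{tr}(Y_iY_j)$ and observe that the off-diagonal terms vanish: for $i\ne j$, independence gives $\mathbb{E}\,\mathrm{tr}(Y_iY_j) = \mathrm{tr}\bigl(\mathbb{E}[Y_i]\,\mathbb{E}[Y_j]\bigr) = 0$. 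Hence $\mathbb{E}\norm{M}_F^2 = \tfrac{1}{B}\,\mathbb{E}\norm{\xi\xi^\top - I}_F^2$ for a single $\xi\sim\mathcal{N}(0,I_{d_x})$.

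The last step is to evaluate $\mathbb{E}\norm{\xi\xi^\top - I}_F^2$ entrywise: the $(k,\ell)$ entry is $\xi_k\xi_\ell - \delta_{k\ell}$, with second moment $\mathbb{E}[\xi_k^2]\,\mathbb{E}[\xi_\ell^2] = 1$ when $k\ne\ell$, and $\mathbb{E}[(\xi_k^2-1)^2] = \mathbb{E}[\xi_k^4] - 1 = 2$ when $k=\ell$, using the standard-normal fourth moment $\mathbb{E}[\xi_k^4]=3$. Summing over all pairs gives $d_x(d_x-1) + 2d_x = d_x^2 + d_x$ (equivalently, $\norm{\xi\xi^\top - I}_F^2 = \norm{\xi}_2^4 - 2\norm{\xi}_2^2 + d_x$ with $\norm{\xi}_2^2\sim\chi^2_{d_x}$). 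Combining the three steps yields the $1/\sqrt{B}$ rate and the stated dependence on $d_x$.

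The only genuinely substantive point in the whole argument is the vanishing of the cross terms, which is what converts the $O(d_x^2)$ single-sample variance into the $O(d_x^2/B)$ rate for the average; everything else is routine Gaussian moment bookkeeping, and Jensen is essentially tight here for large $B$. (One small caveat: the natural bound from this computation is $\sqrt{(d_x^2+d_x)/B}\le d_x\sqrt{2/B}$, so landing exactly on the stated right-hand side would require reading ``$2d_x$'' as ``$2d_x^2$'' or tracking constants slightly differently; since $\norm{\cdot}_F$ also upper-bounds the operator norm used in Theorem~\ref{thm:opt-tracking}, the precise constant does not affect the downstream contraction argument.)
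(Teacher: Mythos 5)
Your proposal follows the same two-step route as the paper's proof---Jensen's inequality to pass to the second moment, then an exact evaluation of $\mathbb{E}\norm{\frac{1}{B}\sum_i \xi_i\xi_i^\top - I}_F^2$---the only difference being that you compute the second moment entrywise (independence kills the cross terms; the Gaussian fourth moment handles the diagonal), whereas the paper identifies $X=\sum_i\xi_i\xi_i^\top$ as a Wishart matrix $W_{d_x}(I,B)$ and invokes its variance. Your bookkeeping is correct: $\mathbb{E}\norm{\xi\xi^\top - I}_F^2 = d_x(d_x-1)\cdot 1 + d_x\cdot 2 = d_x^2+d_x$, hence $\mathbb{E}\norm{\frac{1}{B}\sum_i\xi_i\xi_i^\top - I}_F \le \sqrt{d_x(d_x+1)/B}$. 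The caveat you raise is a genuine discrepancy in the lemma, not a slip on your end: in the paper's proof the quantity $\Tr\,\mathbb{E}(X-BI)(X-BI)^\top$ is the sum of the variances of \emph{all} entries of $X$, which is $B\,d_x(d_x+1)$ (each of the $d_x$ diagonal entries contributes $2B$, each of the $d_x(d_x-1)$ off-diagonal entries contributes $B$), whereas the value $2d_xB$ used there accounts only for the diagonal contributions. So the correct right-hand side is $\sqrt{d_x(d_x+1)/B}$, which exceeds the stated $\sqrt{2d_x/B}$ for $d_x\ge 2$; and since $\norm{M}_F$ concentrates around $\sqrt{\mathbb{E}\norm{M}_F^2}$ for large $B$, the stated bound is not merely unproven but fails for moderate $d_x$. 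As you note, the damage is confined to constants: propagating $\sqrt{d_x(d_x+1)/B}$ through the contraction argument leaves the geometric convergence of Theorem~\ref{thm:opt-tracking} intact and only tightens the batch-size requirement from $B>2d_x\smax^2(H)/\smin^2(H)$ to $B>d_x(d_x+1)\smax^2(H)/\smin^2(H)$, with the analogous change in Theorem~\ref{thm:subopt-tracking}.
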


\begin{proof}
    We begin by rewriting the above expression with a change of variables, where we define
    \begin{equation*}
        X := \sum_{i=1}^{B}\frac{1}{B}\xi_i\xi_i^\top.
    \end{equation*}
    Since $X$ is a sum of outer products of independently distributed normal random vectors, $X$ follows the Wishart distribution. Specifically, we have that
    \begin{equation*}
        X \sim W_{d_x}(I, B).
    \end{equation*}
    The above expression can then be bounded as
    \begin{equation*}
        \begin{aligned}
            \mathbb{E}\norm{\sum_{i=1}^{B}\frac{1}{B}\xi_i\xi_i^\top - I}_F
            &= \frac{1}{B} \mathbb{E}\norm{X - BI}_F\\
            &\leq \frac{1}{B} \sqrt{ \mathbb{E}\norm{X - BI}_F^2 }\\
            &= \frac{1}{B} \sqrt{ \Tr \mathbb{E}(X - BI)(X - BI)^\top}\\
            &= \frac{1}{B} \sqrt{ \Tr \Var(X)}\\
            &= \frac{1}{B} \sqrt{2d_x B}\\
            &= \sqrt{\frac{2d_x}{B}},
        \end{aligned}
    \end{equation*}
    where we first used Jensen's inequality and then, in the following equalities, used the properties of Wishart random variables.
\end{proof}

%%%%%%%%%%%%%%%%%%%%%%%%%%%%%%%%%%%%% Contraction argument  %%%%%%%%%%%%%%%%%%%%%%%%%%%%%%%%%%%%%%%%%%%%
We can now start analyzing the progress of the dual update. First, combining Lemma \ref{lem:rx-expression} with the dual update rule \eqref{eq:lqr-dual-update}, we have that
\begin{equation}\label{eq:app-Theta-update}
    \begin{aligned}
    \Theta^{(k+1)} &= \Theta^{(k)} + \eta \sum_{i=0}^{B}\frac{1}{B} (H \Theta^{(k)} \xi_i + G \xi_i) \xi_i^\top\\
    &= \Theta^{(k)} + \eta (H \Theta^{(k)} + G) \sum_{i=0}^{B}\frac{1}{B}  \xi_i \xi_i^\top.
\end{aligned}
\end{equation}

Thus, for the expected norm of interest, we have
\begin{equation*}
    \begin{aligned}
        &\mathbb{E} \norm{ \Theta^{(k+1)} - \Theta^* }\\
        =& \mathbb{E} \norm{ \Theta^{(k)} + \eta (H \Theta^{(k)} + G) \sum_{i=0}^{B}\frac{1}{B} \xi_i \xi_i^\top - \Theta^* }\\
        \overset{(a)}{=}&\mathbb{E} \norm{\Theta^{(k)} - \Theta^* + \eta (H \Theta^{(k)} + G) \sum_{i=0}^{B}\frac{1}{B}  \xi_i \xi_i^\top - \eta (H \Theta^* + G) \sum_{i=0}^{B}\frac{1}{B} \xi_i \xi_i^\top }\\
        =&\mathbb{E} \norm{ \Theta^{(k)} - \Theta^* + \eta H (\Theta^{(k)} - \Theta^*) \sum_{i=0}^{B}\frac{1}{B} \xi_i \xi_i^\top }\\
        =&\mathbb{E} \norm{(I + \eta H) (\Theta^{(k)} - \Theta^*) + \eta H (\Theta^{(k)} - \Theta^*) (\sum_{i=0}^{B}\frac{1}{B} \xi_i \xi_i^\top - I) }\\
        \leq& \left(\norm{I + \eta H}_2 + \eta \norm{H}_2 \mathbb{E}\norm{\sum_{i=0}^{B}\frac{1}{B} \xi_i \xi_i^\top - I}\right) \norm{\Theta^{(k)}- \Theta^*}\\
        \leq& \left(\norm{I + \eta H}_2 + \eta \norm{H}_2 \sqrt{\frac{2d_x}{B}}\right)\norm{\Theta^{(k)}- \Theta^*},
    \end{aligned}
\end{equation*}
where in step $(a)$ we used the above fact that $H\Theta^* + G = 0$, and in the final step, we used Lemma \ref{lem:expectation}. We have the desired contraction if
\begin{equation*}
    \gamma(\eta, B) := \norm{I + \eta H}_2 + \eta \norm{H}_2 \sqrt{\frac{2d_x}{B}} < 1.
\end{equation*}
Note that choosing
$$\eta^* = \frac{2}{\smax(H)+\smin(H)}$$
minimizes the norm $\norm{I + \eta H}$. Solving for $B$ with this choice of $\eta$, we have that $B$ needs to satisfy that
\begin{equation*}
    B > \frac{2d_x \smax^2(H)}{\smin(H)^2}.
\end{equation*}
Following these appropriate choices of $\eta$ and $B$, we have
\begin{equation*}
    \mathbb{E} \norm{\Theta^{(k)} - \Theta^{*}} \leq \gamma^k \mathbb{E} \norm{\Theta^{(0)} - \Theta^{*}}.
\end{equation*}

%%%%%%%%%%%%%%%%%%%%%%%%%%%%%%%%%%%%% Delta H -> convergence %%%%%%%%%%%%%%%%%%%%%%%%%%%%%%%%%%%%%%%%%%%
\section{Proof of Theorem \ref{thm:subopt-tracking}}\label{sec:appendix-proof-subopt}
We start by showing that when $H$ and $G$ (See Lemma \ref{lem:rx-expression} above) are perturbed by small additive perturbations, the algorithm can still converge to the vicinity of the optimal $\Theta^*$ if the perturbations are small enough.
\begin{lemma}
    Consider the perturbations in $H$ and $G$ as
\begin{equation*}
    H' = H + \Delta_H, \quad G' = G + \Delta_G.
\end{equation*}
If the perturbation in $H$ satisfies that
\begin{equation*}
    \norm{\Delta_H} < \frac{1}{1 + \sqrt{\frac{2d_x}{B_0}}} \frac{\smax(H) - \smin(H)}{\smax(H) + \smin(H)},
\end{equation*}
for any $B_0 \geq 1$, then we can pick step size $\eta$ and batch size $B$ such that the update \eqref{eq:app-Theta-update} converges to the vicinity of the optimal dual variable, i.e., that
\begin{equation*}
    \E \norm{\Theta^{(k)} - \Theta^*} \leq
    \gamma^k \E \norm{\Theta^{(0)} - \Theta^*} + \frac{1-\gamma^k}{1-\gamma} e(\norm{\Delta_H}, \norm{\Delta_G}),
\end{equation*}
where $0 < \gamma < 1$, and
\begin{equation}\label{eq:delta-H-to-radius}
    e(\norm{\Delta_H}, \norm{\Delta_G}) := (1 + \sqrt{\frac{2d_x}{B}})\left(\norm{\Delta_G} + \norm{\Delta_H} \norm{\Theta^*}\right)
\end{equation}
\end{lemma}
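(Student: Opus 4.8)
The plan is to re-run the contraction argument behind Theorem~\ref{thm:opt-tracking} with $H,G$ replaced by $H'=H+\Delta_H$, $G'=G+\Delta_G$, carefully tracking the extra affine term that the perturbations introduce. Writing $W_k := \frac1B\sum_{i=1}^B \xi_i^{(k)}(\xi_i^{(k)})^\top$, the perturbed version of update~\eqref{eq:app-Theta-update} is $\Theta^{(k+1)} = \Theta^{(k)} + \eta (H'\Theta^{(k)}+G')W_k$. Using the \emph{unperturbed} fixed-point identity $H\Theta^*+G=0$ proved above, I would add and subtract $\eta(H\Theta^*+G)W_k=0$, regroup $H'=H+\Delta_H$ and $G'=G+\Delta_G$, and split $W_k = I + (W_k-I)$ to arrive at the error recursion
\begin{equation*}
    \Theta^{(k+1)}-\Theta^* = (I+\eta H')(\Theta^{(k)}-\Theta^*) + \eta H'(\Theta^{(k)}-\Theta^*)(W_k-I) + \eta(\Delta_H\Theta^*+\Delta_G)W_k .
\end{equation*}

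Next I would take norms and expectations. Since the minibatch at step $k$ is drawn fresh and is independent of $\Theta^{(k)}$, I condition on $\Theta^{(k)}$, use submultiplicativity and the triangle inequality, and invoke Lemma~\ref{lem:expectation} (together with $\norm{\cdot}_2\le\norm{\cdot}_F$) to get $\E\norm{W_k-I}\le\sqrt{2d_x/B}$ and hence $\E\norm{W_k}\le 1+\sqrt{2d_x/B}$. With $a_k := \E\norm{\Theta^{(k)}-\Theta^*}$ this gives the affine recursion
\begin{equation*}
    a_{k+1} \le \underbrace{\Bigl(\norm{I+\eta H'}_2 + \eta\norm{H'}_2\sqrt{\tfrac{2d_x}{B}}\Bigr)}_{=:\,\gamma}\, a_k + \eta\Bigl(1+\sqrt{\tfrac{2d_x}{B}}\Bigr)\bigl(\norm{\Delta_G}+\norm{\Delta_H}\norm{\Theta^*}\bigr).
\end{equation*}
Unrolling this recursion yields $a_k \le \gamma^k a_0 + \frac{1-\gamma^k}{1-\gamma}\,e$, with $e$ as in~\eqref{eq:delta-H-to-radius} up to the scalar step-size prefactor; this is the claimed bound as soon as we can guarantee $\gamma<1$.

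The crux is therefore showing that, under the stated smallness condition on $\norm{\Delta_H}$, a choice of $\eta$ and $B$ with $\gamma<1$ still exists. Here I would keep the unperturbed optimal step $\eta^\star = 2/(\smax(H)+\smin(H))$, bound $\norm{I+\eta^\star H'}_2 \le \norm{I+\eta^\star H}_2 + \eta^\star\norm{\Delta_H} = \tfrac{\smax(H)-\smin(H)}{\smax(H)+\smin(H)} + \eta^\star\norm{\Delta_H}$ by the triangle inequality, and bound $\norm{H'}_2 \le \smax(H)+\norm{\Delta_H}$; more generally, Weyl's inequality controls $\smin(H'),\smax(H')$ in terms of $\smin(H),\smax(H),\norm{\Delta_H}$, which also keeps $H'\prec 0$. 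Substituting these into the definition of $\gamma$ and requiring $\norm{\Delta_H}$ below the hypothesized threshold leaves enough slack that, for $B$ at least the stated $B_0$ (which makes the $\sqrt{2d_x/B}$ contributions small enough), $\gamma$ stays strictly below one; this is exactly where the factor $\bigl(1+\sqrt{2d_x/B_0}\bigr)^{-1}$ appears. I expect this last bookkeeping step — pinning down the precise admissible range of $\norm{\Delta_H}$ and $B$ so that $\gamma<1$ — to be the main obstacle, since the error recursion and the geometric-series sum are routine.

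Finally, I would record that $\Delta_G$ never affects $\gamma$: it enters only the additive term, so the steady-state radius $e/(1-\gamma)$ is polynomial (indeed affine) in $(\norm{\Delta_H},\norm{\Delta_G})$. This is the form that feeds into the informal Theorem~\ref{thm:subopt-tracking} once $\Delta_H,\Delta_G$ are expressed through the controller/value-function perturbations $\Delta_P,\Delta_{u,r},\Delta_{u,\xi}$ via the closed-form expressions for $H$ and $G$ in Lemma~\ref{lem:rx-expression}.
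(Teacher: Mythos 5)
Your proposal follows essentially the same route as the paper's proof: subtract the fixed-point identity $H\Theta^*+G=0$, split off the affine perturbation term $\eta(\Delta_H\Theta^*+\Delta_G)W_k$, bound $\E\norm{W_k-I}$ via Lemma~\ref{lem:expectation}, choose $\eta^*=2/(\smax(H)+\smin(H))$, and unroll the resulting affine recursion; your grouping of $H'=H+\Delta_H$ inside the contraction factor is equivalent, after a triangle inequality, to the paper's choice of keeping $H$ in the contraction and folding $\Delta_H(\Theta^{(k)}-\Theta^*)$ back into $\gamma$. Your remark that the additive error carries a step-size prefactor $\eta$ is in fact more careful than the paper's statement, which drops that factor.
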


\begin{proof}
We follow similar steps as when we showed the similar result for the unperturbed case.
\begin{equation*}
    \begin{aligned}
        &\mathbb{E} \norm{ \Theta^{(k+1)} - \Theta^* }\\
        =&\; \mathbb{E} \Bigg\lVert \Theta^{(k)}  - \Theta^* +\eta ((H+\Delta_H) \Theta^{(k)} + (G+\Delta_G)) \sum_{i=0}^{B}\frac{1}{B} \xi_i \xi_i^\top \Bigg\rVert\\
        =&\;\mathbb{E} \Bigg\lVert \Theta^{(k)} - \Theta^* +\eta (H \Theta^{(k)} + G + (\Delta_H \Theta^{(k)} + \Delta_G)) \sum_{i=0}^{B}\frac{1}{B}  \xi_i \xi_i^\top
        - \eta (H \Theta^* + G) \sum_{i=0}^{B}\frac{1}{B} \xi_i \xi_i^\top \Bigg\rVert\\
        \leq&\;\mathbb{E} \norm{\Theta^{(k)} - \Theta^* + \eta H (\Theta^{(k)} - \Theta^*) \sum_{i=0}^{B}\frac{1}{B} \xi_i \xi_i^\top} +\E\norm{(\Delta_H \Theta^{(k)} + \Delta_G)) \sum_{i=0}^{B}\frac{1}{B} \xi_i \xi_i^\top}\\
        \leq&\;\left(\norm{I + \eta H}_2 + \eta \norm{H}_2 \sqrt{\frac{2d_x}{B}}\right)\norm{\Theta^{(k)}- \Theta^*}_F +\E\norm{(\Delta_H \Theta^{(k)} + \Delta_G)) \sum_{i=0}^{B}\frac{1}{B} \xi_i \xi_i^\top},
    \end{aligned}
\end{equation*}
where the last step followed the same steps in the proof of Theorem \ref{thm:opt-tracking}. We now proceed to bound the second term.
\begin{align*}
    &\E\norm{(\Delta_H \Theta^{(k)} + \Delta_G) \sum_{i=0}^{B}\frac{1}{B} \xi_i \xi_i^\top}\\
    \leq& \norm{(\Delta_H (\Theta^{(k)}-\Theta^*) + (\Delta_G + \Delta_H \Theta^*))} \E\norm{\sum_{i=0}^{B}\frac{1}{B} \xi_i \xi_i}\\
    \leq& (1 + \sqrt{\frac{2d_x}{B}}) \Big(\norm{\Delta_H} \norm{\Theta^{(k)}-\Theta^*} + \norm{\Delta_G} + \norm{\Delta_H} \norm{\Theta^*} \Big)\\
\end{align*}
Combining this with the unperturbed result, we get that
\begin{align*}
    &\mathbb{E} \norm{ \Theta^{(k+1)} - \Theta^* }_F\\
    \leq &\left(\norm{I + \eta H}_2 + \eta \norm{H}_2 \sqrt{\frac{2d_x}{B}} + (1 + \sqrt{\frac{2d_x}{B}}) \norm{\Delta_H} \right) \norm{\Theta^{(0)} - \Theta^*} + e(\norm{\Delta_H}, \norm{\Delta_G}),
\end{align*}
where
\begin{align*}
    e(\norm{\Delta_H}, \norm{\Delta_G}) := (1 + \sqrt{\frac{2d_x}{B}})\left(\norm{\Delta_G} + \norm{\Delta_H} \norm{\Theta^*}\right)
\end{align*}
For the iterations to be contractive, we would need that
\begin{equation*}
    \left(\norm{I + \eta H}_2 + \eta \norm{H}_2 \sqrt{\frac{2d_x}{B}} + (1 + \sqrt{\frac{2d_x}{B}}) \norm{\Delta_H} \right) < 1.
\end{equation*}
Again, note that choosing
$$\eta^* = -\frac{2}{\lmax(H)+\lmin(H)}$$
minimizes the norm $\norm{I + \eta H}$. Thus, for the inequality to hold, $\norm{\Delta_H}$ needs to satisfy
\begin{equation*}
\begin{aligned}
    \left(1 + \sqrt{\frac{2d_x}{B_0}}\right)\norm{\Delta_H} < \norm{1 + \eta^* H} = \frac{\lmax(H) - \lmin(H)}{\lmax(H) + \lmin(H)},
\end{aligned}
\end{equation*}
for some $B_0\geq1$ or equivalently
\begin{equation*}
    \norm{\Delta_H} < \frac{1}{1 + \sqrt{\frac{2d_x}{B_0}}} \frac{\lmax(H) - \lmin(H)}{\lmax(H) + \lmin(H)}.
\end{equation*}
We can then pick
\begin{equation*}
    B > \text{max}\left(\frac{2d_x\eta \norm{H}_2^2}{(1 - \norm{I + \eta H})^2 - (1 + \sqrt{\frac{2d_x}{B_0}}) \norm{\Delta_H}}, B_0\right),
\end{equation*}
so that
\begin{equation*}
    \gamma := \norm{I + \eta H}_2 + \eta \norm{H}_2 \sqrt{\frac{2d_x}{B}} + (1 + \sqrt{\frac{2d_x}{B}}) \norm{\Delta_H} < 1.
\end{equation*}
The result then follows from telescoping the sum.
\end{proof}

%%%%%%%%%%%%%%%%%%%%%%%%%%%%%%%%%%%%% Bounding Delta H %%%%%%%%%%%%%%%%%%%%%%%%%%%%%%%%%%%%%%%%%%%
We now consider the perturbations we described in Section \ref{sec:analysis-subopt} and bound the terms $\norm{\Delta_H}$ and $\norm{\Delta_G}$ in terms of the perturbations $\epsilon_P, \epsilon_{u,r}, \epsilon_{u,\xi}$.
\begin{lemma}
Consider the perturbations specified in \eqref{eq:r-update-perturb} and \eqref{eq:u-update-perturb}. Denote the norms of the perturbations as
\begin{equation*}
    \epsilon_P = \norm{\Delta_P}, \epsilon_{u,r} = \norm{\Delta_{u,r}}, \epsilon_{u,\xi} = \norm{\Delta_{u,\xi}}.
\end{equation*}
If $\epsilon_P < \frac{\lmin(Q+P)}{2}$, we have that $\norm{\Delta_H}$ and $\norm{\Delta_G}$ as
\begin{equation}\label{eq:dh-norm-bound}
    \begin{aligned}
    \norm{\Delta_H} &< e_H(\epsilon_P, \epsilon_{u, r}),\\
    &:= \frac{2\epsilon\norm{P}}{\lmin(Q+P)} + \frac{\epsilon^2}{\lmin(Q+P)} + 
        \frac{2\epsilon(\norm{P} + \epsilon)^2}{\lmin(Q+P)^2},
    \end{aligned}
\end{equation}
where $\epsilon = \text{max}(\epsilon_P, \frac{\rho}{2}\norm{E}\epsilon_{u,r})$.
\begin{equation}\label{eq:dg-norm-bound}
\begin{aligned}
    \norm{\Delta_G} &< e_G(\epsilon_P, \epsilon_{u, r}, \epsilon_{u, \xi})\\
    &:= \norm{F} e_H(\epsilon_P, \epsilon_{u, r}) + \epsilon_{u,\xi}.
\end{aligned}
\end{equation}
\end{lemma}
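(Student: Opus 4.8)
The plan is to substitute the perturbed value function \eqref{eq:r-update-perturb} and control law \eqref{eq:u-update-perturb} into the closed-form expressions derived in the proof of Lemma~\ref{lem:rx-expression}, recompute $\hat r^+ - \hat x^+$ as an affine map of $\xi$, and read off the perturbed coefficient matrices $H' = H + \Delta_H$ and $G' = G + \Delta_G$. Concretely, I would first observe that since $\hat p(\tilde r,\xi) = (F\xi - \tilde r)^\top (P + \Delta_P)(F\xi - \tilde r)$, the perturbed planning step is still a strongly convex quadratic in $r$ (provided $Q + P + \Delta_P \succ 0$), with minimizer $\hat r^+ = (Q + P + \Delta_P)^{-1}(P + \Delta_P)(F\xi - \hat v)$, where $\hat v = \Theta^{(k)}\xi$. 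Then $\hat x^+ = E\hat u + F\xi$ with $\hat u = -\tfrac{\rho}{2}\bar{R}^{-1}E^\top(F\xi - \hat r^+ - \hat v) + \Delta_{u,r}(\hat r^+ + \hat v) + \Delta_{u,\xi}\xi$. Expanding $\hat r^+ - \hat x^+$, using the identity $I - \tfrac{\rho}{2}E\bar{R}^{-1}E^\top = \tfrac{2}{\rho}P$ exactly as in Lemma~\ref{lem:rx-expression}, and separating the coefficient of $\hat v$ from the coefficient of $\xi$, yields explicit formulas for $\Delta_H$ and $\Delta_G$ in terms of $\Delta_P, \Delta_{u,r}, \Delta_{u,\xi}$ and the unperturbed matrices $P, Q, R, E, F$. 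In particular $\Delta_G$ comes out equal to $\Delta_H F$ plus a term proportional to $\Delta_{u,\xi}$ — the $\xi$ in the control perturbation plays the same role relative to that term that $F\xi$ plays relative to $\hat v$ elsewhere — which is the source of the relation $e_G = \|F\| e_H + \epsilon_{u,\xi}$.

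The heart of the argument is the bound on $\|\Delta_H\|$. The only potentially dangerous quantities are inverses of the \emph{perturbed} matrix $Q + P + \Delta_P$, and this is exactly where the hypothesis $\epsilon_P < \tfrac{1}{2}\lambda_{\min}(Q+P)$ enters: since $P \succ 0$ and $Q \succeq 0$, Weyl's inequality gives $\lambda_{\min}(Q + P + \Delta_P) \ge \lambda_{\min}(Q+P) - \epsilon_P > \tfrac{1}{2}\lambda_{\min}(Q+P)$, so $Q + P + \Delta_P$ is invertible (hence $\hat r^+$ is well-defined) with $\|(Q+P+\Delta_P)^{-1}\| \le 2/\lambda_{\min}(Q+P)$. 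With this in hand I would expand $\Delta_H$ by repeatedly applying the resolvent identity $A^{-1} - B^{-1} = -A^{-1}(A-B)B^{-1}$ to trade the perturbed inverse for the unperturbed one at the cost of a factor $\Delta_P$, and then bound each resulting product by submultiplicativity together with $\|\hat P\| \le \|P\| + \epsilon$. Collecting the terms that are linear in the perturbations (producing the $\tfrac{2\epsilon\|P\|}{\lambda_{\min}(Q+P)}$ contribution), those quadratic in them ($\tfrac{\epsilon^2}{\lambda_{\min}(Q+P)}$), and those carrying two perturbed inverses together with $\|\hat P\|^2 \le (\|P\| + \epsilon)^2$ ($\tfrac{2\epsilon(\|P\| + \epsilon)^2}{\lambda_{\min}(Q+P)^2}$) reproduces $e_H$. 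The combined scalar $\epsilon := \max(\epsilon_P, \tfrac{\rho}{2}\|E\|\epsilon_{u,r})$ shows up because both $\Delta_P$ (through the planning step) and $E\Delta_{u,r}$ (through the tracking step) feed into the coefficient of $\hat v$, and dominating each by their maximum keeps the final expression compact.

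The bound on $\|\Delta_G\|$ is then immediate: from the explicit formula $\Delta_G = \Delta_H F + (\text{term linear in }\Delta_{u,\xi})$, submultiplicativity gives $\|\Delta_G\| \le \|F\|\,\|\Delta_H\| + \epsilon_{u,\xi} < \|F\| e_H(\epsilon_P,\epsilon_{u,r}) + \epsilon_{u,\xi} = e_G$. I expect the main obstacle to be the bookkeeping in the second paragraph: the expansion of $\hat r^+ - \hat x^+$ contains cross-terms such as $E\Delta_{u,r}\hat r^+$ in which $\hat r^+$ itself depends on $\Delta_P$ and on both $\hat v$ and $\xi$, so one must be disciplined about attributing each piece to $\Delta_H$ versus $\Delta_G$ and about not double-counting when forming $\epsilon$. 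Everything else — the positive-definiteness check, the resolvent manipulations, and the submultiplicative norm bounds — is routine once this decomposition is set up cleanly.
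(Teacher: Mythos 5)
Your proposal is correct and follows essentially the same route as the paper: substitute the perturbed quadratic value function and perturbed control law into the closed-form updates, use the identity $I - \tfrac{\rho}{2}E\bar{R}^{-1}E^\top = \tfrac{2}{\rho}P$ to read off $\Delta_H = \tfrac{2}{\rho}\bigl(P - \tfrac{\rho}{2}E\Delta_{u,r}\bigr)(Q+P+\Delta_P)^{-1}(P+\Delta_P) - \tfrac{2}{\rho}P(Q+P)^{-1}P$ and $\Delta_G = \Delta_H F + E\Delta_{u,\xi}$, and then bound the resulting terms one by one. The only technical divergence is that the paper treats the perturbed inverse via the Woodbury identity applied to a reduced SVD of $\Delta_P$ (with a reverse-triangle-inequality bound on $\smin(C^{-1}+V^\top D^{-1}U)$), whereas you use the resolvent identity together with Weyl's inequality and the hypothesis $\epsilon_P < \tfrac{1}{2}\lmin(Q+P)$; your variant is if anything more direct and reproduces exactly the same three terms of $e_H$.
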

\begin{proof}
We first note that the perturbed update rule gives the updates
\begin{gather*}
    r^+ = (Q + P + \Delta_P)^{-1} (P + \Delta_P) (r + \nu - F\xi),\\
    \hat{u}(\tilde{r}, \xi) = u^*(\tilde{r}, \xi) + \Delta_{u,r}\tilde{r} + \Delta_{u, \xi} \xi.
\end{gather*}
The difference between $r^+$ and $x^+$ can be summarized as
\begin{equation*}
\begin{aligned}
    &r^+ - x^+ \\
    =& (I - \frac{\rho}{2} E\bar{R}^{-1} E^\top-E\Delta_{u,r})r^+ + \frac{\rho}{2} E\bar{R}^{-1} E^\top (F\xi - \nu) -(F+\Delta_{u, \xi})\xi\\
    =& \frac{2}{\rho}(P-\frac{\rho}{2}E\Delta_{u,r})r^+ + \frac{\rho}{2} E\bar{R}^{-1} E^\top (F\xi - \nu) -(F+\Delta_{u, \xi})\xi\\
    =& (H + \Delta_H)\nu - (G+\Delta_G)\xi\\
\end{aligned}
\end{equation*}
where
\begin{align*}
    \Delta_H &:= \frac{2}{\rho}(P-\frac{\rho}{2}E\Delta_{u,r})(Q + P + \Delta_P)^{-1} (P + \Delta_P) -\frac{2}{\rho}P(Q + P)^{-1}P\\
    \Delta_G &:= \Delta_H F + E\Delta_{u, \xi}.
\end{align*}
We now proceed to bound $\norm{\Delta_H}$. Denote the reduced SVD of $\Delta_P$ as
\begin{equation*}
    \Delta_P = U C V^\top.
\end{equation*}
For the sake of simplicity, we denote $D:= P + Q$. By Woodbury matrix identity, we have that
\begin{equation*}
     (A + UCV^\top)^{-1} \\
     = A^{-1} - A^{-1} U (C^{-1} + V^\top A^{-1} U)^{-1} V^\top A^{-1}.
\end{equation*}
Thus, we have that
\begin{align*}
    &(P-\frac{\rho}{2}E\Delta_{u,r})(Q + P + \Delta_P)^{-1} (P + \Delta_P)\\
    =& PD^{-1}P + \frac{\rho}{2}E\Delta_{u,r} D^{-1} P + PD^{-1}\Delta_P- \frac{\rho}{2}E\Delta_{u,r}D^{-1}\Delta_P+\\ &\qquad\qquad (P-\frac{\rho}{2}E\Delta_{u,r})
    D^{-1} U (C^{-1} + V^\top D^{-1} U)^{-1} V^\top D^{-1}(P + \Delta_P).
\end{align*}
The first term corresponds to the unperturbed $H$. We thus proceed to bound all the other terms left. Define
\begin{equation*}
    \epsilon = \text{max}(\epsilon_P, \frac{\rho}{2}\norm{E}\epsilon_{u,r}),
\end{equation*}
we have that
\begin{gather*}
    \norm{\frac{\rho}{2}E\Delta_{u,r} D^{-1} P} \leq \frac{\rho\norm{E\Delta_{u,r}}\norm{P}}{2\lmin(D)} \leq \frac{\epsilon\norm{P}}{\lmin(D)},\\
    \norm{PD^{-1}\Delta_P}\leq\frac{\norm{\Delta_P}\norm{P}}{\lmin(D)} \leq \frac{\epsilon\norm{P}}{\lmin(D)},\\
    \norm{\frac{\rho}{2}E\Delta_{u,r}D^{-1}\Delta_P} \leq \frac{\rho\norm{E\Delta_{u,r}}\norm{\Delta_P}}{2\lmin(D)} \leq \frac{\epsilon^2}{\lmin(D)}.
\end{gather*}
To bound the last term, we use the fact that
\begin{align*}
    &\Big\lVert(P-\frac{\rho}{2}E\Delta_{u,r}) D^{-1} U (C^{-1} + V^\top D^{-1} U)^{-1} V^\top D^{-1} (P+\Delta_{P})\Big\rVert\\
    \leq &(\norm{P}+\epsilon)^2 \cdot \frac{1}{\lmin(D)}\cdot \frac{1}{\smin(C^{-1} + V^\top D^{-1}U)}
\end{align*}
We invoke the reverse triangle inequality to get that
\begin{align*}
    \frac{1}{\smin(C^{-1} + V^\top D^{-1}U)} &\leq \frac{1}{\smin(C^{-1}) - \norm{V^\top D^{-1}U}}\\
    &= \frac{1}{\frac{1}{\norm{C}} - \frac{1}{\lmin(D)}}\\
    &\leq \epsilon_P\cdot\frac{\lmin(D)}{\lmin(D) - \epsilon_P}
\end{align*}
From the assumption that $\epsilon_P < \frac{\lmin(D)}{2}$, we have that
\begin{equation*}
    \frac{\lmin(D)}{\lmin(D) - \epsilon_P} \leq 2.
\end{equation*}
Thus, we have that
\begin{align*}
    &\Big\lVert(P-\frac{\rho}{2}E\Delta_{u,r}) D^{-1} U (C^{-1} + V^\top D^{-1} U)^{-1} V^\top D^{-1} (P+\Delta_{P})\Big\rVert\\
    \leq &\frac{2\epsilon(\norm{P} + \epsilon)^2}{\lmin(D)^2}. 
\end{align*}
Thus, we overall have that
\begin{equation*}
    \norm{\Delta_H} \leq 
    \frac{2\epsilon\norm{P}}{\lmin(D)} + \frac{\epsilon^2}{\lmin(D)} + \frac{2\epsilon(\norm{P} + \epsilon)^2}{\lmin(D)^2}
\end{equation*}
and that
\begin{equation*}
    \norm{\Delta_G} \leq \norm{\Delta_H}\norm{F} + \epsilon_{u, \xi}.
\end{equation*}
\end{proof}
Combining the two Lemmas above, we can now state Theorem \ref{thm:subopt-tracking} formally.
\begin{theorem} (Formal statement of Theorem \ref{thm:subopt-tracking})
Consider the cost functions \eqref{eq:quadratic-cost} and dynamics \eqref{eq:linear-dynamics}. Consider the update rules \eqref{eq:lqr-dual-pred}-\eqref{eq:lqr-dual-update} with the perturbations \eqref{eq:r-update-perturb} and \eqref{eq:u-update-perturb}. Denote the size of the perturbations as 
$$\epsilon_P = \norm{\Delta_P}, \epsilon_{u,r} = \norm{\Delta_{u,r}}, \epsilon_{u,\xi} = \norm{\Delta_{u,\xi}}.$$
Define $e_H$ as in \eqref{eq:dh-norm-bound} and $e_G$ as in \eqref{eq:dg-norm-bound}. Given any $\Theta^{(0)}$, if the perturbations $\epsilon_P, \epsilon_{u,r}$ satisfy that,
\begin{equation*}
     e_H(\epsilon_P, \epsilon_{u, r}) < \frac{1}{1 + \sqrt{\frac{2d_x}{B}}} \frac{\smax(H) - \smin(H)}{\smax(H) + \smin(H)},
\end{equation*}
for any $B_0 \geq 1$, one can pick
\begin{equation*}
    \eta = \frac{2}{\smax(H)+\smin(H)}
\end{equation*}
and batch size
\begin{equation*}
        B > \text{max}\left(\frac{2d_x\eta \norm{H}_2^2}{(1 - \norm{I + \eta H})^2 - (1 + \sqrt{\frac{2d_x}{B_0}}) e_H(\epsilon_P, \epsilon_{u,r})}, B_0\right),
\end{equation*} such that
\begin{equation*}
    \E \norm{\Theta^{(k)} - \Theta^*} \leq \\
    \gamma^k \E \norm{\Theta^{(0)} - \Theta^*} + \frac{1-\gamma^k}{1-\gamma} e(\epsilon_P, \epsilon_{u,r}, \epsilon_{u,\xi}),
\end{equation*}
where $0 < \gamma < 1$, and
\begin{equation*}
e(\epsilon_P, \epsilon_{u,r}, \epsilon_{u,\xi}) := (1 + \sqrt{\frac{2d_x}{B}})\\
\left(e_G(\epsilon_P, \epsilon_{u, r}, \epsilon_{u, \xi}) + e_H(\epsilon_P, \epsilon_{u, r}) \norm{\Theta^*}\right).
\end{equation*}
\end{theorem}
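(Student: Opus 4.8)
The plan is to prove the theorem by composing the two lemmas that immediately precede it, treating the suboptimal tracking and planning as an additive perturbation of the exact linear maps $H$ and $G$ from Lemma~\ref{lem:rx-expression}. Under the optimal updates the per-iteration displacement satisfies $r^+ - x^+ = H\Theta^{(k)}\xi + G\xi$; the perturbations \eqref{eq:r-update-perturb} and \eqref{eq:u-update-perturb} replace this by $r^+ - x^+ = (H + \Delta_H)\Theta^{(k)}\xi + (G + \Delta_G)\xi$ for some $\Delta_H, \Delta_G$ that I must first identify and bound, and then feed into a perturbed version of the contraction argument behind Theorem~\ref{thm:opt-tracking}.

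First I would re-derive the closed-form displacement under the perturbed updates. The perturbed value function \eqref{eq:r-update-perturb} changes the normal matrix in the $r$-update from $Q + P$ to $Q + P + \Delta_P$, so the new reference is $r^+ = (Q+P+\Delta_P)^{-1}(P+\Delta_P)(r + \nu - F\xi)$, while the perturbed control \eqref{eq:u-update-perturb} adds $E\Delta_{u,r}$ and $E\Delta_{u,\xi}$ terms to the executed state. Collecting terms gives explicit expressions $\Delta_H = \tfrac{2}{\rho}(P - \tfrac{\rho}{2}E\Delta_{u,r})(Q+P+\Delta_P)^{-1}(P+\Delta_P) - \tfrac{2}{\rho}P(Q+P)^{-1}P$ and $\Delta_G = \Delta_H F + E\Delta_{u,\xi}$. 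The main obstacle is bounding $\norm{\Delta_H}$, because of the perturbed inverse $(Q+P+\Delta_P)^{-1}$. To handle it I would take the reduced SVD $\Delta_P = UCV^\top$ and apply the Woodbury identity, which isolates the unperturbed term $P(Q+P)^{-1}P$ and leaves a sum of correction terms, each carrying at least one factor of a perturbation. With $\epsilon := \max(\epsilon_P, \tfrac{\rho}{2}\norm{E}\epsilon_{u,r})$ and $D := Q+P$, the linear and bilinear corrections are bounded directly by $\epsilon\norm{P}/\lmin(D)$ and $\epsilon^2/\lmin(D)$; the Woodbury correction term is the delicate one, where I would use the reverse triangle inequality on $\smin(C^{-1} + V^\top D^{-1}U)$ together with the hypothesis $\epsilon_P < \lmin(D)/2$ to certify $\lmin(D)/(\lmin(D) - \epsilon_P) \le 2$, yielding the $2\epsilon(\norm{P}+\epsilon)^2/\lmin(D)^2$ term. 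Summing gives the bound $e_H$ in \eqref{eq:dh-norm-bound}, and propagating through $\Delta_G = \Delta_H F + E\Delta_{u,\xi}$ gives $e_G$ in \eqref{eq:dg-norm-bound}.

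With $\Delta_H, \Delta_G$ controlled, I would rerun the contraction computation. The perturbed $\Theta$-update now reads $\Theta^{(k+1)} = \Theta^{(k)} + \eta((H+\Delta_H)\Theta^{(k)} + (G+\Delta_G))\sum_i \tfrac{1}{B}\xi_i\xi_i^\top$, and using $H\Theta^* + G = 0$ I would subtract $\Theta^*$ and split $\Delta_H\Theta^{(k)} = \Delta_H(\Theta^{(k)} - \Theta^*) + \Delta_H\Theta^*$. The first piece inflates the contraction factor by $(1 + \sqrt{2d_x/B})\norm{\Delta_H}$, while the residual $\Delta_H\Theta^* + \Delta_G$ forms a constant additive bias. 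Controlling the empirical covariance deviation $\norm{\sum_i \tfrac1B \xi_i\xi_i^\top - I}$ via Lemma~\ref{lem:expectation} gives the one-step recursion $\E\norm{\Theta^{(k+1)} - \Theta^*} \le \gamma \E\norm{\Theta^{(k)} - \Theta^*} + e(\cdot)$ with $\gamma = \norm{I + \eta H} + \eta\norm{H}\sqrt{2d_x/B} + (1 + \sqrt{2d_x/B})\norm{\Delta_H}$.

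Finally, choosing $\eta = 2/(\smax(H) + \smin(H))$ minimizes $\norm{I + \eta H}$, so the smallness hypothesis $e_H < \tfrac{1}{1 + \sqrt{2d_x/B_0}}\tfrac{\smax(H) - \smin(H)}{\smax(H) + \smin(H)}$ forces the perturbation term below the slack $1 - \norm{I + \eta H}$; taking $B$ above the stated threshold then drives the covariance-concentration term small enough to secure $\gamma < 1$. Telescoping the geometric recursion yields the claimed bound $\gamma^k\E\norm{\Theta^{(0)} - \Theta^*} + \tfrac{1-\gamma^k}{1-\gamma}e(\epsilon_P, \epsilon_{u,r}, \epsilon_{u,\xi})$, and substituting the bounds $e_H, e_G$ into the bias produces the stated error term. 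I expect the Woodbury bound on $\norm{\Delta_H}$ to be the crux; everything downstream is a re-run of the Theorem~\ref{thm:opt-tracking} contraction with one extra additive term.
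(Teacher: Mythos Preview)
Your proposal is correct and follows essentially the same approach as the paper: the paper likewise identifies $\Delta_H = \tfrac{2}{\rho}(P - \tfrac{\rho}{2}E\Delta_{u,r})(Q+P+\Delta_P)^{-1}(P+\Delta_P) - \tfrac{2}{\rho}P(Q+P)^{-1}P$ and $\Delta_G = \Delta_H F + E\Delta_{u,\xi}$, bounds $\norm{\Delta_H}$ via the Woodbury identity on the reduced SVD of $\Delta_P$ under the hypothesis $\epsilon_P < \lmin(Q+P)/2$, and then reruns the contraction argument of Theorem~\ref{thm:opt-tracking} with the extra additive bias, telescoping the resulting geometric recursion. The only cosmetic difference is that the paper packages the perturbed-contraction step and the $\norm{\Delta_H}, \norm{\Delta_G}$ bounds as two separate lemmas (in the opposite order from your presentation) before combining them.
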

We verify the predictions of the theorem qualitatively in the experiment section.
\section{Planning Only Subset of States} \label{sec:subset-plan}
We consider the case where the state cost $\mathcal{C}$ and constraints $\mathcal{X}$ only require a subset of the states, i.e., if they are defined in terms of $z_t = g(x_t) \in \R^{d_z}$, with $d_z < d_x$. Specifically, we consider the problem
\begin{equation*}
    \begin{aligned}
        \underset{x, u, z}{\text{minimize}}&\quad
            \mathcal{C}(z) + \mathcal{D}(u)\\
        \text{subject to}&\quad x_{t+1} = f(x_t, u_t), \quad \forall t=0, 1, ..., T-1,\\
        &\quad z\in \mathcal{X}, \quad u \in \mathcal{U},\\
        &\quad x_0 = \xi,\\
        &\quad z_t = g(x_t).
    \end{aligned}
\end{equation*}
In this case, one can modify the redundant constraint to be $r_t = z_t$ to arrive at the following redundant problem
\begin{equation*}
    \begin{aligned}
        \underset{r, x, u}{\text{minimize}}&\quad
            \mathcal{C}(r) + \mathcal{D}(u)\\
        \text{subject to}&\quad x_{t+1} = f(x_t, u_t), \quad \forall t=0, 1, ..., T-1,\\
        &\quad r\in \mathcal{X}, \quad u \in \mathcal{U},\\
        &\quad x_0 = \xi, \quad r = g(x),
    \end{aligned}
\end{equation*}
where we wrote $g(x)$ to denote $ [g(x_0), g(x_1), ..., g(x_T)]^\top$ with a slight abuse of notation. A similar derivation to that in Section \ref{sec:layering} then arrives at the following iterative update
\begin{align*}
    &
    \begin{aligned}
        \quad (r^+, x^+, u^+) =\underset{r, x, u}{\text{argmin}}\;&\mathcal{C}(r) + \mathcal{D}(u) + \frac{\rho}{2} \lVert r + \nu - g(x) \rVert_2^2\\
        \text{s.t.}\;& x_{t+1} = f(x_t, u_t),\quad\forall t,\\
        &r\in \mathcal{X}, u\in\mathcal{U},\\
        & x_0 = \xi
    \end{aligned}\\
    &\quad \nu^+ = \nu + (r^+ - x^+)
\end{align*}
and the nested optimization
\begin{equation*}
    \begin{aligned}
        r^+ = \underset{r}{\text{minimize}}&\quad\mathcal{C}(r)+ p^*(r + \nu; \xi)\\
        \text{s.t.}&\quad r\in \mathcal{X}
    \end{aligned}
\end{equation*}
where $p^*(r +\nu; \xi)$ is the locally optimal value of the $(x, u)$-minimization step
\begin{equation*}
    \begin{aligned}
        p^*(r+\nu; \xi) = \underset{x, u}{\text{min}}&\quad\mathcal{D}(u) + \frac{\rho}{2} \lVert r + \nu - g(x) \rVert_2^2\\
        \text{s.t.}&\quad x_{t+1} = f(x_t, u_t),\; u \in \mathcal{U},\; \forall t,\\
        &\quad x_0 = \xi.%\\
        %&\hspace{-3em}=\mathcal{D}(u^*) + \frac{\rho}{2} \lVert r + \nu^* - x^* \rVert_2^2
    \end{aligned}
\end{equation*}
Note that the only difference is that the trajectory planner only generates reference trajectories on the states $z$ required for the state cost and constraints, and that the tracking cost for the lower-level controller also only concerns tracking those states.

\end{document}